\definecolor{mygray}{gray}{.3}
\numberwithin{equation}{section}
\newcommand{\mH}{\mathcal{H}}
\newcommand{\tW}{\tilde{W}}
\newcommand{\tL}{\tilde{\Lambda}}
\newcommand{\up}{\underline{p}}
\newcommand{\beq}{\begin{equation*}}
\newcommand{\eeq}{\end{equation*}}
\newcommand{\bneq}{\begin{equation}}
\newcommand{\eneq}{\end{equation}}
\newtheorem{lemma}{Lemma}[section]
\newtheorem{theorem}{Proposition}[section]
\author{Matthias Plaschke \\ \\ \small{Faculty of Physics, University of Vienna}}
\title{Wedge Local Deformations of Charged Fields leading to Anyonic Commutation Relations}
\date{\today}
\begin{document}
\maketitle

\begin{abstract}
The method of deforming free fields by using multiplication operators on Fock space, introduced in \cite{Lechner12}, is generalized to a \emph{charged} free field on two- and three-dimensional Minkowski space. In this case the deformation function can be chosen in such a way that the deformed fields satisfy generalized commutation relations, i.e. they behave like Anyons instead of Bosons. \\
The fields are ``polarization free'' in the sense that they create only one-particle states from the vacuum and they are localized in wedges (or ``paths of wedges''), which makes it possible to circumvent a No-Go theorem by J. Mund \cite{Mund98}, stating that there are no free Anyons localized in spacelike cones. The two-particle scattering matrix, however, can be defined and is different from unity.
\end{abstract}

\tableofcontents

\hspace{1em}

\section{Introduction}
In \cite{Lechner12} it has been shown how a free hermitian scalar Bose field can be deformed such that the S-matrix of the model becomes non-trivial. This can be regarded as a generalization of earlier work by Grosse and Lechner \cite{GrossLech, GrossLech07} where they show how a quantum field on Moyal-Minkowski spacetime can be understood as a deformation of a quantum field on ordinary spacetime. In 1+1 dimensions this leads to a family of integrable models, namely those with a factorizing S-matrix, satisfying $S(0) = 1$. By deforming free fields on a \emph{fermionic} Fock space, this procedure has been generalized in \cite{Alazzawi12} to include also models with $S(0) = -1$. On Fock space such deformations are defined using multiplication operators $(T_R(Qp)\Psi)(p_1,...,p_n) \sim \prod R(Qp\cdot p_i) \Psi(p_1,...,p_n)$ where $Q$ is an antisymmetric ``deformation matrix'' depending on a wedge $W$ (see also \cite{GrossLech}). A wedge region $W$ is defined to be the Lorentz transform of the standard wedge $W_0 = \{x\in \mathds{R}^d : x_1 > |x_0|\}$ and its causal complement will be denoted by $W'$. These operators are then used to deform the creation and annihilation operators of the free field $\Phi(f) = a^*(f^+) + a(\bar{f}^+)$ according to $a_{R,Q}(p) := a(p) T_R(Qp)$ etc. Here $f \in \mathscr{S}$ is a testfunction and $f^+$ denotes the restriction of its Fourier transform to the upper mass shell. One can then show that under certain conditions on the functions $R$ the deformed fields $\Phi_{R,Q}(f) := a^*_{R,Q}(f^+) + a_{R,Q}(\bar{f}^+)$ are still Poincaré covariant Wightman fields which are no longer localizable in compact regions but they are localized in wedges. In two dimensions, however, it is possible to show that the algebras for double cones \footnote{In $d=1+1$ double cones are obtained by intersection of two opposite translated wedges.} still satisfy the Reeh-Schlieder property if the S-matrix of the model fulfills some kind of additional regularity condition \cite{Lechner08}. \\
In this work we want to generalize this procedure to a \emph{charged} scalar field in low dimensions and we will see that in this case it is also possible to change the statistics of the fields in such a way that they satisfy anyonic commutation relations. This is an interesting result because in their recent paper \cite{BrosMund} Bros and Mund show that if a quantum field theory has tempered polarization-free generators the corresponding fields necessarily satisfy Bose- or Fermi commutation relations. This is a generalization of the original No-Go theorem by Mund \cite{Mund98} and the reason why we can circumvent it is that in our case the algebras for regions smaller than a wedge are ``too small'' in the sense that they don't generate a dense set in the Hilbert space when acting on the vacuum (Reeh-Schlieder property).\\ 
To construct such a deformation we consider as Hilbert space the symmetric Fock space $\mH = \mathcal{F}_s(\mH_1)$ over the doubled one-particle space $\mH_1 := \mH_1^+\oplus\mH_1^-$, where $\mH_1^\pm = L^2(\mathds{R},d\theta)$ in the two-dimensional case and $\mH_1^\pm = L^2(\mathds{R}^3,d\mu)$ in the three dimensional case. \sloppy{Here \mbox{$d\mu(p) = \delta(p^2-m^2)\theta(p_0) d^3p$}} denotes the Lorentz invariant measure on the mass shell and $\theta$ denotes the rapidity, which is defined by $p(\theta) = m \binom{\cosh(\theta)}{\sinh(\theta)}$. The Hilbert space $\mH$ is isomorphic to the tensor product $\mathcal{F}_s(\mH_1^+)\otimes\mathcal{F}_s(\mH_1^-)$ and we will refer to the first tensor factor as the ``particle space'' and to the second as the ``anti-particle space''. Because of this tensor product structure we will consider vectors of the form $\Psi_n\otimes\Psi_m$ which we denote by $\Psi_n^m$ for simplicity. On this doubled Fock space we now have a charge conjugation operator $C$, which simply exchanges the two factors, and a charge operator $(Q\Psi)_n^m(p_1,...,p_{n+m}) := (n-m)\Psi_n^m(p_1,...,p_{n+m})$. We also naturally have two sets of creation and annihilation operators, $a, a^*$ and $b, b^*$, which are defined according to
\beq
\begin{split}
(a(\varphi) \Psi)_n^m(p_1,...,p_{n+m}) &= \sqrt{n+1} \int d\mu(p) \overline{\varphi(p)} \Psi_{n+1}^m(p,p_1,...,p_{n+m}) \ , \hspace{4em} a^*(\varphi) := a(\varphi)^* \\
(b(\varphi) \Psi)_n^m(p_1,...,p_{n+m}) &= \sqrt{m+1} \int d\mu(p) \overline{\varphi(p)} \Psi_n^{m+1}(p_1,...,p_n,p,p_{n+1},...,p_{n+m}) \ , \hspace{1em} b^*(\varphi) := b(\varphi)^*
\end{split}
\eeq
and their distributional kernels satisfy the canonical commutation relations \footnote{$a^\sharp$ stands for either $a^*$ or $a$.}
\beq
\begin{split}
[a^\sharp(p),a^\sharp(p')] &= 0 \ , \hspace{5.5em} \ [a^\sharp(p),b^\sharp(p')] = 0, \\
[a(p),a^*(p')] &= \omega_p\ \delta(p-p')\ , \hspace{1em} \ [a(p),b^*(p')] = 0,
\end{split}
\eeq
with $\omega_p = \sqrt{\mathbf{p}^2+m^2}$. Also the corresponding charge conjugated relations hold, where $a$ and $b$ are interchanged. Using these creation and annihilation operators one can then define the free field $\Phi(f) := a^*(f^+) + b(\bar{f}^+)$ and the charge conjugate field $\Phi^*(f) = b^*(f^+) + a(\bar{f}^+)$. \\ 
Additionally we will have a representation $U(a,\Lambda)$ of the Poincaré group together with a space-time reflection $J$, which acts according to
\bneq \label{jactionbeta}
(J\Psi)_n^m(p_1,..,p_{n+m}) := e^{i\beta q}\, \overline{\Psi_n^m(-j p_1,...,-jp_{n+m})},
\eneq
where $j$ is the reflection at the $x_2$ axis, $j(x_0,x_1,x_2) = (-x_0,-x_1,x_2)$, and $\beta$ is a parameter which will be specified below. In the rapidity parametrization in $d=1+1$ this simplifies to $(J\Psi)_n^m(\theta_1,...,\theta_{n+m}) = e^{i\beta q}\, \overline{\Psi_n^m(\theta_1,...,\theta_{n+m})}$.
\\

\section{Deformations on Two-Dimensional Minkowski Space}
\subsection{No-Go Theorem for Compactly Localized ``Free'' Anyons}
On the charged Hilbert space we now want to construct covariant quantum fields with anyonic commutation relations as deformations of ordinary (free) Bose fields on the bosonic Fock-space. The simplest possibility would be fields $\Phi_\lambda$ that create one-particle states when applied to the vacuum $\Omega$, and fulfill commutation relations of the form
\bneq  \label{CommrelComp}
\begin{split}
\Phi_\lambda(f)\Phi^*_\lambda(g) &= e^{2\pi i \lambda \epsilon(f,g)} \Phi^*_\lambda(g)\Phi_\lambda(f), \\
\Phi_\lambda(f)\Phi_\lambda(g) &= e^{-2\pi i \lambda \epsilon(f,g)} \Phi_\lambda(g)\Phi_\lambda(f)
\end{split}
\eneq
whenever $f$ and $g$ have spacelike separated support. The factor $\epsilon(f,g)$ in the exponential measures if the testfunction $g$ is supported to the right or the left spacelike complement of $\operatorname{supp}f$ (which is a well-defined Poincaré-invariant concept in $d=1+1$), such that $\epsilon(f,g) = 1$ if $g$ lies to the left of $f$ and $\epsilon(g,f) = -\epsilon(f,g)$. \\
Unfortunately such fields are not possible if one assumes (apart from the fields being one-particle generators) that they satisfy the Reeh-Schlieder property and makes the technical assumption that the fields are temperate. This means that for any two spacelike separated testfunctions $f$ and $g$ the vector $U(x)\Phi_\lambda(g)\Omega$ (where $U(x)$ denotes the representation of the translations) is in the domain of $\Phi_\lambda(f)$ for all $x \in \mathds{R}^2$ and the function
\beq
x \mapsto ||\Phi_\lambda(f)U(x)\Phi_\lambda(g)\Omega||
\eeq
is locally integrable and polynomially bounded for large $x$.
This is made explicit in the following theorem.
\begin{theorem}
Consider a Wightman field $\Phi_\lambda$ satisfying the commutation relations \eqref{CommrelComp}. If $\Phi_\lambda$ is temperate, satisfies the Reeh-Schlieder property and generates one-particle vectors from the vacuum, then $\lambda \in \frac{\mathds{N}}{2}$, i.e. $\Phi_\lambda$ satisfies Bose- or Fermi statistics.
\end{theorem}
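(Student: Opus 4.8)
The plan is to extract the phase $e^{2\pi i\lambda}$ from the two-point functions of the field and to show that analyticity forces it to be a fourth root of unity. Writing $\phi=\Phi_\lambda$, $\phi^*=\Phi^*_\lambda$, I would consider the two tempered distributions
\beq
W(\xi)=\langle\Omega,\phi(\xi)\phi^*(0)\Omega\rangle,\qquad W'(\xi)=\langle\Omega,\phi^*(0)\phi(\xi)\Omega\rangle .
\eeq
The temperateness assumption is exactly what guarantees that the smeared functions $x\mapsto\langle\Omega,\Phi_\lambda(f)U(x)\Phi^*_\lambda(g)\Omega\rangle$ and its reverse are locally integrable and polynomially bounded, hence define tempered distributions to which the standard spectral machinery applies. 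Since $\phi^*(0)\Omega$ and $\phi(0)\Omega$ are one-particle vectors with energy--momentum support in the closed forward cone $\overline{V_+}$, the spectrum condition shows that $W$ is the boundary value of a function holomorphic in the backward tube $\{\xi-i\eta:\eta\in V_+\}$ and $W'$ the boundary value of a function holomorphic in the forward tube $\{\xi+i\eta:\eta\in V_+\}$.

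First I would exploit Lorentz covariance. As $\Phi_\lambda$ is a charged scalar field, the mixed two-point functions $W$ and $W'$ are Lorentz invariant, so by the Bargmann--Hall--Wightman theorem each of them extends from its tube to the extended tube and is analytic and single-valued at the real Jost points, which in $d=1+1$ are precisely the spacelike vectors. The essential geometric input is that the complex boost of rapidity $i\pi$ equals $-\mathds{1}$, so that $-\mathds{1}\in L_+(\mathds{C})$; invariance under the complexified Lorentz group then yields $W(\xi)=W(-\xi)$, and likewise for $W'$, on the extended tube. Since $\xi\mapsto-\xi$ maps the right spacelike wedge onto the left one while preserving $\xi^2$, this says that the restrictions of $W$ to the two spacelike wedges coincide as functions of $\xi^2$; writing these restrictions $w_R,w_L$ and similarly $w'_R,w'_L$, I obtain $w_R=w_L$ and $w'_R=w'_L$. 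I emphasize that this step uses only covariance and the spectrum condition, never locality, which is what makes it available in the anyonic setting.

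Next I would feed in the commutation relations \eqref{CommrelComp}. For $\xi$ in the right spacelike wedge the point $0$ lies to the left of $\xi$, so $\epsilon=+1$ and $\phi(\xi)\phi^*(0)=e^{2\pi i\lambda}\phi^*(0)\phi(\xi)$, giving $w_R=e^{2\pi i\lambda}w'_R$; for $\xi$ in the left wedge $\epsilon=-1$ and $w_L=e^{-2\pi i\lambda}w'_L$. Combining with $w_R=w_L$ and $w'_R=w'_L$ from the previous step yields $e^{2\pi i\lambda}w'_R=e^{-2\pi i\lambda}w'_R$, hence $e^{4\pi i\lambda}=1$ as soon as $w'_R\not\equiv0$. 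The required non-triviality is where the Reeh--Schlieder property and the one-particle hypothesis enter: since $\Phi_\lambda$ generates one-particle vectors and cannot annihilate the vacuum, $\phi(g)\Omega\neq0$ for suitable $g$, so $W'$, being essentially $\langle\phi(0)\Omega,\phi(\xi)\Omega\rangle$, is not identically zero. Therefore $2\lambda\in\mathds{Z}$, i.e. $\lambda\in\tfrac{\mathds{N}}{2}$, which is the assertion.

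The main obstacle I anticipate is the rigorous justification of the analytic continuation and of the identity $w_R=w_L$: one must check that the temperate two-point functions genuinely enjoy tube analyticity with one-particle mass-shell support, apply the extended-tube argument at the spacelike Jost points, and verify that the boundary values from the two opposite wedges are correctly related by $\xi\mapsto-\xi$. All remaining steps, namely reading off the phases from \eqref{CommrelComp} and the final algebra $e^{4\pi i\lambda}=1$, are then routine. A variant avoiding point-fields would work instead with the vector-valued temperate function $x\mapsto\Phi_\lambda(f)U(x)\Phi^*_\lambda(g)\Omega$, paired against the dense set of vectors supplied by Reeh--Schlieder; this is closer to the polarization-free-generator formulation of \cite{BrosMund}, but leads to the same constraint on $\lambda$.
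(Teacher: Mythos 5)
Your argument is correct in its main thrust, but it is a genuinely different route from the paper's. The paper follows Mund's Jost--Schroer-type strategy: temperateness, translation covariance, the spectrum condition and the one-particle hypothesis are used to show (via the decomposition $F^++F^-$ and the edge-of-the-wedge theorem) that the phase-weighted commutator applied to $\Omega$ remains a multiple of $\Omega$ even after translating one field out of the spacelike complement, Eqn.\ \eqref{JostSchroer}; comparing the two incompatible phases $\omega^{\pm 1}$ on the overlap region and using the separating property of the vacuum then forces either $\omega^2=1$ or that $\Phi_\lambda(g)^*\Phi_\lambda(\alpha_x f)$, and hence the fields themselves, are multiples of the identity. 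You instead run the classical two-point-function argument --- essentially the $d=1+1$ analogue of the L\"uders--Zumino ``wrong statistics'' theorem: tube analyticity of $W,W'$ from the spectrum condition, Bargmann--Hall--Wightman, the fact that $-\mathds{1}=\Lambda(i\pi)\in L_+(\mathds{C})$ in even spacetime dimension, hence $W(\xi)=W(-\xi)$ and $W'(\xi)=W'(-\xi)$ at the Jost points, and then the left/right asymmetry of \eqref{CommrelComp} gives $(\omega-\omega^{-1})W'=0$ on the right wedge, so $e^{4\pi i\lambda}=1$ unless $W'\equiv 0$ (identity theorem: a function holomorphic on the connected extended tube vanishing on a real open set vanishes identically), which is excluded because $\Phi_\lambda(f)\Omega\neq 0$. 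Note that your proof is in one respect cheaper than the paper's: you never actually use temperateness --- the temperedness of the two-point functions is automatic for a Wightman field, so your opening claim that the temperateness hypothesis ``is exactly what guarantees'' it is off the mark --- and you use the one-particle hypothesis only through the nonvanishing of $\Phi_\lambda(f)\Omega$.

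The price is the one load-bearing assumption you slip in without comment: ``as $\Phi_\lambda$ is a charged \emph{scalar} field, the mixed two-point functions are Lorentz invariant.'' This is where Lorentz covariance enters essentially, and it is precisely the loophole that known two-dimensional anyonic fields exploit: if the field carries an anomalous boost weight, the two-point function is only covariant, $W(\Lambda(\theta)\xi)=e^{c\theta}W(\xi)$, continuation to $\theta=i\pi$ replaces $W(-\xi)=W(\xi)$ by $W(-\xi)=e^{i\pi c}W(\xi)$, and your final identity relates $\lambda$ to the weight instead of quantizing it --- this is the 2D spin--statistics connection, consistent with the paper's own remark that statistics is not an intrinsic concept in $d=1+1$. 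The paper's proof, by contrast, invokes only translations, the spectrum condition and temperateness, and so is insensitive to the boost transformation law of the field. If you make the scalar covariance explicit as a hypothesis (which is the natural reading, since the paper deforms a charged scalar field), and add a line checking single-valuedness of the BHW extension in $d=2$, where $L_+(\mathds{C})\cong\mathds{C}^\ast$ is not simply connected (standard, but worth stating), your argument is complete and yields the same constraint $2\lambda\in\mathds{Z}$.
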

\begin{proof}
The impossibility to construct fields satisfying the above requirements follows from the same arguments as in the No-Go theorem for string-localized free Anyons in $d=2+1$ in \cite{Mund98}. First one can show that only a multiple of $\Omega$ is added to the commutation relations \eqref{CommrelComp} if one translates the localization regions, such that they are not spacelike separated any more, i.e.
\bneq \label{JostSchroer}
\begin{split}
&U(x)\Phi_\lambda(f)U(-x)\Phi^*_\lambda(g)\Omega - e^{2\pi i \lambda \epsilon(f,g)}\Phi^*_\lambda(g)U(x)\Phi_\lambda(f)\Omega  \\
\equiv &\Phi_\lambda(\alpha_x(f))\Phi^*_\lambda(g)\Omega - \omega^{\epsilon(f,g)}\Phi^*_\lambda(g)\Phi_\lambda(\alpha_x(f))\Omega = C_{f,g}(x)\Omega
\end{split}
\eneq
To prove this relation one uses the aforementioned temperateness condition to show that for two spacelike separated fields $\phi_1$ and $\phi_2$ the $\mH$-valued function $U(x)\phi_1U(-x)U(y)\phi_2U(-y)\Omega$ is a tempered distribution, whose Fourier transform has support contained in $(H_m^-\cup H_m^+)\times H_m^+$, where $H_m^\pm$ denotes the upper/lower mass shell. Defining $F^+$ and $F^-$ by
\beq
U(x)\phi_1 U(-x)U(y)\phi_2 U(-y)\Omega =: F_{1,2}^+(x,y)+F_{1,2}^-(x,y) \ , \hspace{1em} \ \mathrm{with}\ \ supp\widetilde{F^\pm} \subset H_m^\pm\times H_m^+,
\eeq
one gets that
\beq
\begin{split}
F_{1,2}^-(x,y) &= \langle\Omega,U(x)\phi_1U(-x)U(y)\phi_2U(-y)\Omega\rangle\cdot \Omega \\
sp_P F_{1,2}^+(x,y) &\subseteq H_m^+ + H_m^+,
\end{split}
\eeq
where $sp_P \Psi$ denotes the spectral support of $\Psi$ w.r.t. the energy-momentum operators\footnote{i.e. the set of points $p\in spec(P)$ such that for any neighborhood $V$ of $p$, the spectral projector
$E_V(P)$ does not map $\Psi$ to zero.}. Using these spectral properties and the edge of the wedge theorem then leads to $F_{1,2}^+(x,0) - \omega F_{2,1}^+(0,x) = 0$ which finally leads to Eqn. \eqref{JostSchroer}. Moreover, the relation \eqref{JostSchroer} extends from $\Omega$ to the dense subspace $\mathcal{F}(\operatorname{supp} f)'\Omega\cap\mathcal{F}(\operatorname{supp} g)'\Omega\cap\mathcal{F}(\operatorname{supp} f + x)'\Omega$, where $\mathcal{F}(\mathcal{O})$ denotes the algebra generated by the fields localized in $\mathcal{O}$. (For a more detailed and mathematically rigorous formulation of these arguments see \cite{Mund98}.) \\
Next we chose two testfunctions $f$ and $g$ such that $f$ is localized to the left of $g$. Then we know because of relation \eqref{JostSchroer} that on a dense subspace there holds
\beq
\Phi_\lambda(\alpha_x f) \Phi_\lambda(g)^* = \omega\ \Phi_\lambda(g)^* \Phi_\lambda(\alpha_x f) + C_{f,g}(x)
\eeq
for all $x\in\mathds{R}^2$. On the other hand the commutation relations \eqref{CommrelComp} state that
\beq
\Phi_\lambda(\alpha_x f) \Phi_\lambda(g)^* = \omega^{-1} \Phi_\lambda(g)^* \Phi_\lambda(\alpha_x f) ,
\eeq
holds for all $x$ such that $\alpha_x f$ lies to the \emph{right} of $g$. Combining these two equations we see that
\beq
\Phi_\lambda(g)^* \Phi_\lambda(\alpha_x f) (\omega^{-1}-\omega) = C_{f,g}(x)
\eeq
for all $x$ where $\alpha_x f$ is localized to the right of $g$. Now suppose that $\omega^2 \neq 1$ (i.e. $\omega^{-1}-\omega\neq 0$). Then $\Phi_\lambda(g)^* \Phi_\lambda(\alpha_x f)$ is proportional to the identity when acting on the vacuum vector and thus proportional to the identity on its whole domain of definition because the vacuum is separable. Therefore $\Phi_\lambda(g)^* \Phi_\lambda(\alpha_x f)$ is invariant under Poincaré transformations and together with locality this implies that the individual fields $\Phi_\lambda(\alpha_x f), \Phi_\lambda(g)^*$ are already proportional to the identity, which is inconsistent with anyonic commutation relations.  This shows that $\omega^2 = 1$, which corresponds to Bose- or Fermi-statistics. 
\end{proof}
\vspace{1em}
The above proof rests upon the assumption that the fields are localizable in compact regions in Minkowski space, because only then it is possible to have three of them mutually spacelike separated. Therefore, we will now try to construct Anyon fields with weaker localization properties, namely localization in wedges. For this purpose we will use the recent construction by Lechner \cite{Lechner12} to obtain wedge-localized polarization-free generators as a deformation of free Bose fields.
\\

\subsection{Wedge-local Anyons}
Following \cite{Lechner12} we can define multiplication operators $T_R(\theta)$ on $\mathcal{F}_s(\mH_1^\pm)$ according to 
\bneq
(T_R(\theta)\Psi)_n(\theta_1,...,\theta_n) := \prod_{j=1}^n R(\theta-\theta_j) \Psi_n(\theta_1,...,\theta_n),
\eneq
with a ``deformation function'' $\theta \mapsto R(\theta)$ satisfying $\overline{R(\theta)} = R(\theta)^{-1}$ but \emph{not} necessarily $R(-\theta) = \overline{R(\theta)}$ (The condition $R(-\theta) = R(\theta)^{-1}$ would lead to $R(0) = \pm 1$ and therefore to Bose- or Fermi statistics).  This can now be generalized to the charged situation at hand and to anyonic statistics as follows. On $\mathcal{F}_s(\mH_1^+)\otimes\mathcal{F}_s(\mH_1^-)$ we define
\bneq
T_{R,r}(\theta) := e^{i\frac{\rho}{2}} (T_R(\theta)\otimes T_r(\theta)),
\eneq
with functions $R$ and $r$ satisfying $R(-\theta) = e^{i\mu} \overline{R(\theta)}$ and $r(-\theta) = e^{i\nu}\overline{r(\theta)}$ and three yet undefined parameters $\mu,\nu,\rho \in\mathds{R}$. Denoting the charge conjugation again by $C$ and taking $\beta = 0$ in the definition of the space-time reflection $J$ the operator $T_{R,r}$ transforms according to
\beq
C T_{R,r}(\theta) C = T_{r,R}(\theta) , \hspace{1em} \ J T_{R,r}(\theta) J = T_{R,r}(\theta)^* = e^{-i\rho} T_{\bar{R},\bar{r}}(\theta).
\eeq
This multiplication operator is now used to define deformed particle annihilation operators
\bneq
a_{R,r}(\theta) := a(\theta) T_{R,r}(\theta) = e^{i\frac{\rho}{2}} (a_R(\theta)\otimes T_r(\theta)),
\eneq
where $a_R(\theta)$ denotes the standard ``Lechner deformed'' operator on $\mathcal{F}_s(\mH_1^+)$.
The charge conjugated operator $b_{R,r}(\theta)$ then turns out to be
\bneq
b_{R,r}(\theta) := C a_{R,r}(\theta) C = b(\theta) T_{r,R}(\theta) = e^{i\frac{\rho}{2}}(T_r(\theta)\otimes b_R(\theta)).
\eneq
Using the transformation properties of $T_{R,r}(\theta)$ the space-time reflected operators are $Ja_{R,r}(\theta)J = e^{-i\rho}a_{\bar{R},\bar{r}}(\theta)$ and the adjoint operators are of course defined as $a^*_{R,r}(\theta) := a_{R,r}(\theta)^*$. \\
To determine the locality properties of the deformed fields we will need the commutation relations between the various creation and annihilation operators and between them and the operators $T_{R,r}(\theta)$. A straightforward calculation yields
\beq
\begin{split}
a(\theta)T_{R,r}(\theta ') = R(\theta '-\theta) T_{R,r}(\theta ')a(\theta) \ , \hspace{1em} \ &b(\theta)T_{R,r}(\theta ') = r(\theta '-\theta) T_{R,r}(\theta ')b(\theta), \\
a^*(\theta)T_{R,r}(\theta ') = R(\theta '-\theta)^{-1} T_{R,r}(\theta ')a^*(\theta) \ , \hspace{1em} \ &b^*(\theta)T_{R,r}(\theta ') = r(\theta '-\theta)^{-1} T_{R,r}(\theta ')b^*(\theta).
\end{split}
\eeq
This leads to the commutation relations
\bneq
\begin{split}
a^\sharp_{R,r}(\theta)a^\sharp_{\bar{R},\bar{r}}(\theta ') &= e^{-i\mu}a^\sharp_{\bar{R},\bar{r}}(\theta ')a^\sharp_{R,r}(\theta), \\
a^\sharp_{R,r}(\theta)b^\sharp_{\bar{R},\bar{r}}(\theta ') &= e^{-i\nu}b^\sharp_{\bar{R},\bar{r}}(\theta ')a^\sharp_{R,r}(\theta).
\end{split}
\eneq
The relations with $a$ and $b$ interchanged follow by charge conjugation and noting that $\mu$ and $\nu$ depend on $R$ and $r$ and satisfy $\mu(\bar{R}) = -\mu(R)$. The commutation relations for mixed creation and annihilation operators turn out to be
\bneq
\begin{split}
a_{R,r}(\theta)a^*_{\bar{R},\bar{r}}(\theta ') &= e^{i\mu}a^*_{\bar{R},\bar{r}}(\theta ')a_{R,r}(\theta) + e^{i(\mu-\rho)}\delta(\theta-\theta ')T_{R,r}(\theta)^2, \\
b_{R,r}(\theta)b^*_{\bar{R},\bar{r}}(\theta ') &= e^{i\mu}b^*_{\bar{R},\bar{r}}(\theta ')b_{R,r}(\theta) + e^{i(\mu-\rho)}\delta(\theta-\theta ')T_{r,R}(\theta)^2.
\end{split}
\eneq
Note that the last term is different for the $a$'s and $b$'s if $R\neq r$! Using the trivial commutation relations between $a$ and $b^*$ we also get
\bneq
a_{R,r}(\theta)b^*_{\bar{R},\bar{r}}(\theta ') = e^{i\nu} b^*_{\bar{R},\bar{r}}(\theta ')a_{R,r}(\theta),
\eneq
and the charge conjugated relation with $a$ and $b$ interchanged. \\
\vspace{1em}

Having computed all the necessary relations of the creation and annihilation operators we can now define the deformed fields. For a test function $f\in\mathscr{S}(\mathds{R}^2)$ we define
\bneq
\begin{split}
\Phi_{R,r}(f) &:= a^*_{R,r}(f^+) + b_{R,r}(\bar{f}^+), \\
\Phi^*_{R,r}(f) &= b^*_{R,r}(f^+) + a_{R,r}(\bar{f}^+) = C \Phi_{R,r}(f) C,
\end{split}
\eneq
with $f^\pm(\theta) := \frac{1}{2\pi}\int d^2x f(\pm x) e^{ip(\theta) x}$. We also need the field for the reflected wedge
\bneq
J \Phi_{R,r}(\alpha_j f) J = e^{i\rho}a^*_{\bar{R},\bar{r}}(f^+) +e^{-i\rho}b_{\bar{R},\bar{r}}(\bar{f}^+) \equiv \widehat{\Phi}_{R,r}(f),
\eneq
where the reflected test function $\alpha_j f$ is defined as $(\alpha_j f)(x) := \overline{f(-x)}$. \\
A straightforward computation, using the aforementioned relations for the deformed creation and annihilation operators, then shows that for the fields to satisfy simple commutation relations we need to set $\nu = -\mu$,\footnote{Note that if we would work with only one set of annihilation/creation operators on a single Fock space, the condition $e^{i \mu} = e^{-i \mu}$ would lead to $\mu = k\pi$ with $k\in 2\mathds{Z}+1$, i.e. to Bose- or Fermi statistics.} which then leads to
\bneq
\Phi_{R,r}(f) \widehat{\Phi}_{R,r}(g) = e^{-i\mu} \widehat{\Phi}_{R,r}(g)\Phi_{R,r}(f). 
\eneq
Next we want to calculate the commutation relations between $\Phi_{R,r}(f)$ and $\widehat{\Phi}^*_{R,r}(g)$ which leads to
\bneq \label{fieldcommrel}
\begin{split}
&\Phi_{R,r}(f)\widehat{\Phi}^*_{R,r}(g) - e^{i\mu}\widehat{\Phi}^*_{R,r}(g)\Phi_{R,r}(f) = \\
e^{i\mu}\int &d\theta g^+(\theta)f^-(\theta) T_{r,R}(\theta)^2 - e^{-2i\rho}\int d\theta f^+(\theta) g^-(\theta) T_{\bar{R},\bar{r}}(\theta)^2.
\end{split}
\eneq
For spacelike separated $f$ and $g$ with $supp(g)$ to the left of $supp(f)$ we want the right-hand side of equation \eqref{fieldcommrel} to vanish; thus we need to set $\rho = -\frac{\mu}{2}$ such that $e^{i\mu} = e^{-2i\rho}$. Then we can use the arguments in \cite{Lechner03} to show that
\bneq
(\Phi_{R,r}(f)\widehat{\Phi}^*_{R,r}(g) - e^{i\mu}\widehat{\Phi}^*_{R,r}(g)\Phi_{R,r}(f))\Psi_n^m
\eneq
vanishes for all $\Psi_n^m \equiv \Psi_n\otimes\Psi_m$ in the domain of definition of $\Phi$ if the deformation functions $R$ and $r$ are analytic in the strip $S(0,\pi):=\{z\in\mathds{C}: 0<Im(z)<\pi\}$, bounded and continuous on its closure and satisfy the ``crossing relations''
\bneq
R(\theta+i\pi) = \overline{r(\theta)} \ , \hspace{1em} \ r(\theta+i\pi) = \overline{R(\theta)}.
\eneq
These conditions then allow us to shift the integration in \eqref{fieldcommrel} from $\theta$ to $\theta+i\pi$, because of the known analyticity properties of $f^\pm$ and $g^\pm$. For a more detailed argument see \cite{Lechner03} or the proof of Proposition 3.1. \\
Therefore we have shown that if there is a wedge $W$ such that $supp(f)\subset W$ and $supp(g)\subset W'$ the fields satisfy
\bneq
\Phi_{R,r}(f)\widehat{\Phi}^*_{R,r}(g) = e^{i\mu}\widehat{\Phi}^*_{R,r}(g)\Phi_{R,r}(f).
\eneq \vspace{1em}

To summarize our construction let us compare it with the neutral case studied in \cite{Lechner12} by writing down the input we need in both cases to define a deformation. \\
\begin{itemize}
\item In \cite{Lechner12} a deformation was defined on the neutral bosonic Fock space by choosing a function $\mathcal{R}: S(0,\pi) \rightarrow \mathds{C}$, which is analytic in $S(0,\pi)$, bounded on $\overline{S(0,\pi)}$ and satisfies
\bneq \label{condR}
\mathcal{R}(-x) = \overline{\mathcal{R}(x)} = \mathcal{R}(x)^{-1}\ , \hspace{1em} \forall x\in\mathds{R}
\eneq
and the crossing relation
\bneq \label{crossing}
\mathcal{R}(i\pi-x) = \mathcal{R}(x)\ , \hspace{1em} \forall x\in\mathds{R}.
\eneq
The most general class of such functions has been calculated in \cite{Lechner06} and it turns out that they are of the form
\bneq
\mathcal{R}(\theta) = \pm e^{ia\sinh{\theta}}\prod_k \frac{\sinh{\beta_k}-\sinh{\theta}}{\sinh{\beta_k}+\sinh{\theta}}
\eneq
with some parameters $a,\{\beta_k\}$ satisfying certain additional conditions.
\item In the charged case at hand we now have two functions $R$ and $r$, analytic in $S(0,\pi)$ and bounded on its closure, which have to satisfy relations similar to \eqref{condR} and \eqref{crossing}, namely
\bneq
\begin{split}
e^{-i\mu}R(-x) &= \overline{R(x)} = R(x)^{-1}\ , \hspace{1em} \forall x\in\mathds{R} \\
e^{i\mu}r(-x) &= \overline{r(x)} = r(x)^{-1}\ , \hspace{1em} \forall x\in\mathds{R}
\end{split}
\eneq 
and the crossing relation in this case turns out to be
\bneq \label{doublecrossing}
R(i\pi-x) = r(x)\ , \hspace{1em} \forall x\in\mathds{R}.
\eneq
Now of course one can always separate the phase factor $e^{\pm i\mu}$ from the deformation functions by defining $R =: e^{i\frac{\mu}{2}} R^+$ and $r =: e^{-i\frac{\mu}{2}} R^-$. The functions $R^\pm$ then satisfy the usual relations \eqref{condR} without the phase factors present. But the aforementioned conditions can be simplified further by noting that because of the analyticity of $R^+$ and $R^-$ the crossing relation \eqref{doublecrossing} can be used to \emph{define} the function $R^-$ in terms of $R^+$ by setting $R^-(\theta) := R^+(i\pi-\theta), \forall \theta\in S(0,\pi)$. \\
Therefore we are left with choosing a parameter $\mu\in\mathds{R}$ and a single deformation function $R^+: S(0,\pi)\rightarrow\mathds{C}$, satisfying
\bneq
\begin{split}
&R^+(-x) = \overline{R^+(x)} = R^+(x)^{-1}\ , \hspace{1em} \forall x\in\mathds{R} \\
&R^+(i\pi-x) = \overline{R^+(i\pi+x)} = R^+(i\pi+x)^{-1}\ , \hspace{1em} \forall x\in\mathds{R}
\end{split}
\eneq
but \emph{not} the crossing relation \eqref{crossing}! So there is no condition relating the values of $R^+$ on the upper boundary of $S(0,\pi)$ to those on the real boundary.
\end{itemize}
\vspace{1em}
It is now an interesting question if the class of admissible deformation functions is in the charged case actually larger than in the neutral case. The answer to this question is yes. To show that there really are functions $f: \overline{S(0,\pi)}\rightarrow \mathds{C}$ satisfying all the above requirements for our $R$-functions but \emph{not} the crossing symmetry $f(i\pi-x) = f(x), \forall x\in\mathds{R}$, consider the functions \footnote{I want to thank Christian Köhler and Martin Könenberg, for pointing out this kind of functions to me.}
\bneq
f(z) := i \frac{e^{z}\alpha-i\bar{\alpha}}{e^{z}\bar{\alpha}+i\alpha}  , \hspace{1em} \alpha = 1-i w \ , \ \ w\in\mathds{R}, |w|<1.
\eneq
They are clearly analytic in $S(0,\pi)$ and a short calculation shows that for real $x$ they also satisfy $f(-x) = \overline{f(x)} = f(x)^{-1}$ and $f(i\pi-x) = \overline{f(i\pi+x)} = f(i\pi+x)^{-1}$. But the condition $f(i\pi-x) = f(x)$ is  not satisfied (for $w=0$ at least $f(i\pi-x) = -f(x)$ still holds)! \\
So we could chose e.g. an arbitrary deformation function $\mathcal{R}$ from the neutral case and define
\bneq \label{generaldeformation}
R^+(z) := f(z) \mathcal{R}(z) \ , \hspace{1em} R^-(z) = f(i\pi-z) \mathcal{R}(z).
\eneq
We believe that this is essentially the most general possibility of choosing the function $R^+$ but we could not yet prove this statement. \\ \vspace{1em}

In the following we will see how these deformation functions for the charged fields are related to the two-particle scattering matrix of our model. Because we are now only interested in the momentum dependence of the S-matrix we set $\mu=0$ which means that the deformed fields commute if the test functions have the right support properties. \\
Because of the simple structure of our deformed fields and because everything is on-shell in our setting the outgoing scattering states are of the form $a^*_{R,r}(f^+)a^*_{\bar{R},\bar{r}}(g^+)\Omega$ if the Fourier transforms of $f$ and $g$ have compact support and $\operatorname{supp}(g)$ is to the left of $\operatorname{supp}(f)$. (For the exact definition of the outgoing/incoming scattering states see section \ref{sec:scattering}) So because of $T_r(\theta)\Omega = \Omega$ the S-matrix for particle-particle ($S_{pp}$) and antiparticle-antiparticle ($S_{aa}$) scattering is formally just $R^2$. But in the charged case at hand we also have to consider states of the form $a^*_{R,r}(f^+)b^*_{\bar{R},\bar{r}}(g^+)\Omega$ which basically look like
\beq
(a^*_{R,r}(f^+)b^*_{\bar{R},\bar{r}}(g^+)\Omega)(\theta_1,\theta_2) \sim r(\theta_1-\theta_2)f^+(\theta_1)g^+(\theta_2)+r(\theta_2-\theta_1)f^+(\theta_2)g^+(\theta_1).
\eeq
Therefore the S-matrix for particle-antiparticle scattering ($S_{pa}$) turns out to be $r^2$. We can summarize this by formally writing
\bneq
\begin{split}
S_{pp} \sim S_{aa} \sim R^2, \\
S_{pa} \sim S_{ap} \sim r^2.
\end{split}
\eneq
Recalling the relation $r(\theta) = \overline{R(\theta+i\pi)}$ we see that $R^2$ evaluated at the lower boundary of the strip $S(0,\pi)$ determines the scattering between the same kind of particles while $R^2$ at the upper boundary determines particle-antiparticle scattering.
\\ \vspace{1em}

In the next step we want to analyze the dependence on the additional parameter $\mu$. Setting $\mu= 2\pi \lambda$, $\lambda \in\mathds{R}$ the deformed fields satisfy Anyonic commutation relations in the usual form, i.e.
\bneq
\Phi \widehat{\Phi} = e^{-2i\pi\lambda} \widehat{\Phi} \Phi \ , \hspace{1em} \ \Phi \widehat{\Phi}^* = e^{2i\pi\lambda} \widehat{\Phi}^* \Phi.
\eneq
In this case we can simplify the above deformation and rewrite it using the charge operator $Q$. For this purpose we take for simplicity a standard deformation function $\mathcal{R}$ satisfying $\mathcal{R}(-\theta) = \overline{\mathcal{R}(\theta)}$ and define
\bneq \label{simplefunctions}
R(\theta) = e^{i\pi\lambda}\mathcal{R}(\theta)\ \hspace{1em} \ r(\theta) = e^{-i\pi\lambda}\mathcal{R}(\theta). 
\eneq
The corresponding multiplication operator $T_{R,r}(\theta) = e^{-i\pi\lambda/2} T_R(\theta)\otimes T_r(\theta)$ can then be written as 
\bneq
T_{R,r}(\theta) = (T_{\mathcal{R}}(\theta)\otimes T_{\mathcal{R}}(\theta) )e^{i\pi\lambda(Q-1/2)}.
\eneq
Note that by explicitly using the charge operator $Q$ in the deformation we see that we are effectively using a different deformation function on every charge sector, i.e. $T_{R,r}$ depends on the charge of the vector we are applying it to (albeit in a rather trivial manner in the above example). \\
We can now again define the deformed fields according to
\bneq \label{fielddef}
\begin{split}
\Phi_{\mathcal{R},\lambda}(f) &:= a^*_{\mathcal{R},\lambda}(f^+) + b_{\mathcal{R},\lambda}(\bar{f}^+) = \Phi_{\mathcal{R}}(f)e^{-i\pi\lambda(Q+1/2)}, \\
\Phi^*_{\mathcal{R},\lambda}(f) &= b^*_{\mathcal{R},\lambda}(f^+) + a_{\mathcal{R},\lambda}(\bar{f}^+) = \Phi^*_{\mathcal{R}}(f)e^{i\pi\lambda(Q-1/2)},
\end{split}
\eneq
and calculate the fields for the opposite wedge,
\beq
J \Phi_{\mathcal{R},\lambda}(\alpha_j f) J = \Phi_{\overline{\mathcal{R}},-\lambda}(f).
\eeq
Using the commutation relations 
\beq
Q \Phi_{\mathcal{R},\lambda}(f) = \Phi_{\mathcal{R},\lambda}(f) (Q+1) \ , \hspace{1em} Q \Phi^*_{\mathcal{R},\lambda}(f) = \Phi^*_{\mathcal{R},\lambda}(f)(Q-1)
\eeq
between the fields and the charge operator it is now easy to check that the fields satisfy
\bneq \label{anyoniccommrel}
\begin{split}
\Phi_{\mathcal{R},\lambda}(f) \Phi^*_{\overline{\mathcal{R}},-\lambda}(g) &= e^{2\pi i \lambda}\, \Phi^*_{\overline{\mathcal{R}},-\lambda}(g)\Phi_{\mathcal{R},\lambda}(f) , \\
\Phi_{\mathcal{R},\lambda}(f) \Phi_{\overline{\mathcal{R}},-\lambda}(g) &= e^{-2\pi i \lambda}\, \Phi_{\overline{\mathcal{R}},-\lambda}(g)\Phi_{\mathcal{R},\lambda}(f),
\end{split}
\eneq
if $f$ and $g$ have again the right support properties. \\
Thus we have seen that if we apply the deformations of \cite{Lechner12} to a charged scalar field, we can change the deformations using the charge operator to obtain wedge-localized anyonic one-particle generators. \\
From the definition \eqref{fielddef} we also see that the one-particle states the fields create are changed by a constant factor $e^{- i\pi \lambda/2}$ and also the S-matrix elements get multiplied with such exponential factors. Furthermore, the scattering states are no longer symmetric under permutations but inherit the braided symmetry from the fields which create them from the vacuum.  \\ \vspace{1em}

However, one could get the same result by choosing a deformation with $\mu=0$ and instead take a representation $J_{\lambda}$ of the reflections with $\beta = \pi\lambda q$, i.e.
\bneq
J_{\lambda} := e^{i\pi\lambda Q^2} J = \bigoplus_{q=-\infty}^\infty e^{i\pi\lambda q^2} J P_q,
\eneq
where $J$ is just the representation used before, acting as complex conjugation, and $P_q$ is the projection onto the charge $q$ Hilbert space. \\
Now take any wedge-localized charged bosonic field $\phi$ and define the field for the opposite wedge as 
\bneq
\hat{\phi} := J_\lambda \phi J_\lambda.
\eneq
A straightforward calculation then shows that these fields indeed satisfy commutation relations of the form \eqref{anyoniccommrel}. This construction is possible because we can choose a different representation of the Lorentz group for every charge $q$ due to the charge structure of the Hilbert space. In this way it is possible to arbitrarily choose the commutation relations of the wedge-local fields, which is in accordance with the well-known fact that the statistics of a quantum field is not an intrinsic concept in $d=1+1$ dimensions (cf. ``Bosonization'', ``Fermionization''). \\
Hence we will now proceed to a more interesting construction, namely wedge-local fields with anyonic statistics in $d=2+1$. \\

\section{Deformations on Three-Dimensional Minkowski Space}
In the next step we will try to find a wedge-local deformation leading to braided commutation relations in $d=2+1$. The reason why this is considerably more complicated is the presence of the rotations in the Lorentz group in three dimensions and therefore there are not only left- and right wedges but a continuous family of possible directions of wedges. Moreover, it is known that for the fields to have definite commutation relations they have to carry further information in addition to the localization region. Therefore, we consider localization not only in wedges but in so-called paths of wedges, containing as an additional information a kind of winding number (for the definition of ``paths of wedges'' used in this work see e.g. \cite{Mund03}). Such paths of wedges are denoted by $\tW = (W,\tilde{e})$ and are defined by a wedge $W$ and a homotopy class $\tilde{e}$ of paths in the manifold of space-like directions $H^3$, starting at a reference direction $e_0$ and ending at a point inside the wedge. To be concrete we chose $e_0$ as $e_0 = (0,0,-1)$. \\
As in the two-dimensional case we will work on the charged Fock-space, where we now have $\mH_1^{\pm} = L^2(\mathds{R}^3,d\mu)$ with the measure on the mass shell $d\mu(p) = d^3p \delta(p^2-m^2)\theta(p_0)$ and we use the shorthand notation $\Psi_n^m(\up) := \Psi_n^m(p_1,...,p_{n+m})$ with $\up = (p_1,...,p_{n+m})$. \\ 

\subsection{Representation of the Covering Group of $\mathcal{L}_+^\uparrow$}
Because of the spin-statistics theorem in 2+1 dimensions \cite{Mund09} we want $2\pi$-rotations to act non-trivially, i.e. $U(2\pi) \neq \pm 1$. We therefore consider representations of the universal covering of the Poincaré group $\tilde{\mathcal{P}}_+^\uparrow$, which is the semi-direct product of the translations with the universal covering of the Lorentz group $\tilde{\mathcal{L}}_+^\uparrow$. In three dimensions this group can be identified with the set 
\bneq
\{ (\gamma,\omega) | \gamma \in \mathds{C}, |\gamma|<1, \omega \in \mathds{R}\},
\eneq
with corresponding group multiplication $(\gamma_1,\omega_1)(\gamma_2,\omega_2) = (\gamma_3,\omega_3)$, which is given by \cite[p.594]{Bargmann48}
\beq
\begin{split}
\gamma_3 &= (\gamma_2 + \gamma_1 e^{-i\omega_2})(1+\gamma_1\overline{\gamma_2} e^{-i\omega_2})^{-1} ,\\
\omega_3 &= \omega_1+\omega_2 - i\log\left[(1+\gamma_1\overline{\gamma_2}e^{-i\omega_2})(c.c.)^{-1}\right].
\end{split}
\eeq
Identifying elements of this group with homotopy classes $\tL$ of paths $t\mapsto \Lambda(t) \in \mathcal{L}_+^\uparrow$ starting at the unit element and ending at $\Lambda \in \mathcal{L}_+^\uparrow$ we can define the action of $\tilde{\mathcal{L}}_+^\uparrow$ on paths of wedges $\tW = (W,\tilde{e})$ according to $\tL\cdot \tW = (\Lambda W, \tL\cdot\tilde{e})$.
\\
The translations act on the Fock-space in the usual way as $(U(a)\Psi)(p) = e^{iap}\Psi(p)$ and they obviously do not change winding numbers of localization regions. Therefore we will be interested in the representation of the covering of the Lorentz group most of the time. Because of the charge structure of our Fock-space, $\mH = \oplus \mH_q$, we consider a representation of $\tilde{\mathcal{P}}_+^\uparrow$ of the form $U = \oplus U_q$. The observables of a theory should commute with rotations around $2\pi$ and their restriction onto a subspace $\mH_q$ with fixed charge should be irreducible. Therefore the $2\pi$-rotations act as a multiple of the identity on vectors with fixed charge, i.e. 
\bneq
U_q(2\pi) = e^{2\pi i \mathcal{S}_q}\cdot \mathds{1},
\eneq
where $\mathcal{S}_q$ is called the spin of the sector with charge $q$ (determined only modulo 1). From the general theory of superselection sectors one knows that the spin is the same for a sector and its conjugate sector (see \cite{FroeMar, Mund09}), therefore we must have $\mathcal{S}_q = \mathcal{S}_{-q}$. When restricted to the 1-(anti-) particle Hilbert space $\mH_1^{\pm}$ we also want our representation of the Poincaré group to be one of the well-known irreducible unitary representations for a spin $\sigma$, defined according to,
\bneq
U^{(\sigma)}(a,\tL)\varphi)(p) := e^{i a\cdot p}\, e^{i\sigma\Omega(\tL,p)}\varphi(\Lambda^{-1}p).
\eneq
The factor $\Omega(\tL,p)$ is the Wigner-rotation, which can be expressed for $\tL = (\gamma,\omega)$ according to (see e.g. \cite[Appendix B]{Mund03})
\beq
\Omega(\tL,p) = \omega - i\log\left[(1-\gamma(p)\bar{\gamma}e^{-i\omega})(c.c)^{-1}\right] - i\log\left[\left(1+\frac{\gamma-\gamma(p)e^{-i\omega}} {1-\gamma(p)\bar{\gamma}e^{-i\omega}}\bar{\gamma}(\Lambda^{-1}p)\right). (c.c)^{-1}\right].
\eeq 
It satisfies the cocycle relation
\bneq \label{cocyclerel}
\Omega(\tL\tL ',p) = \Omega(\tL,p) + \Omega(\tL ',\Lambda^{-1}p)
\eneq
and for pure rotations $\tilde{r}$ it simplifies to $\Omega(\tilde{r}(\omega),p) = \omega$. \\
Motivated by the above considerations we define the full representation $U$ on $\mH$ according to
\bneq \label{representation}
(U(a,\tL)\Psi)_n^m(\up) := e^{i a\cdot \up}\ e^{i s_q \Omega_n^m(\tL,\up)}\Psi_n^m(\Lambda^{-1}\up),
\eneq
where we have introduced the notation
\bneq
\Omega_n^m(\tL,\up) := \sum_{i=1}^n\Omega(\tL,p_i)-\sum_{j=n+1}^{n+m}\Omega(\tL,p_j).
\eneq
From this we see that a $2\pi$-rotation acts on vectors of charge $q$ according to
\beq
U_q(2\pi) = e^{2\pi i q s_q}\cdot\mathds{1} \overset{!}{=} e^{2\pi i \mathcal{S}_q}\cdot \mathds{1},
\eeq
leading to $\mathcal{S}_q = q s_q$ which implies that $s_{-q} = -s_q$. A simple choice for $s_q$ is $s_q = \lambda q$ with an unspecified parameter $\lambda\in\mathds{R}$ and we will see that this choice leads to anyonic statistics for our deformed fields. \\
Restricting this representation to $\mH_1$ leads to
\beq
\begin{split}
(U(a,\tL)\Psi)_1^0(p) &= e^{i a\cdot p}\, e^{is_1\Omega(\tL,p)}\Psi_1^0(\Lambda^{-1}p)= e^{i a\cdot p}\, e^{i\lambda\Omega(\tL,p)}\Psi_1^0(\Lambda^{-1}p), \\
(U(a,\tL)\Psi)_0^1(p) &= e^{i a\cdot p}\, e^{-is_{-1}\Omega(\tL,p)}\Psi_0^1(\Lambda^{-1}p) = e^{i a\cdot p}\, e^{i\lambda\Omega(\tL,p)}\Psi_0^1(\Lambda^{-1}p),
\end{split}
\eeq
so we see that on $\mH_1$ the representation really reduces to
\beq
U(a,\tL)\mid_{\mH_1} = U^{(\lambda)}(a,\tL)\oplus U^{(\lambda)}(a,\tL).
\eeq

\textbf{Extension to $\tilde{\mathcal{P}}_+$:} \\
Given the proper orthochronous Poincaré group $\mathcal{P}_+^\uparrow$ one can obtain the proper Poincaré group $\mathcal{P}_+$ by adjoining the reflection $j$ at the $x_2$-axis, which satisfies the relations
\bneq
j^2 = 1 \ , \hspace{1em} j\Lambda_1(t)j = \Lambda_1(t) \ , \hspace{1em} jr(\omega)j = r(-\omega) \ , \hspace{1em} j (a,1) j = (j\cdot a,1),
\eneq 
(which then imply $j\Lambda_2(t)j = \Lambda_2(-t)$), where $\Lambda_1, \Lambda_2$ are boosts in the direction of the $x_1, x_2$ axis respectively. This yields a disconnected group and its universal covering $\tilde{\mathcal{P}}_+$ can be defined by adjoining an element $\tilde{j}$ to $\tilde{\mathcal{P}}_+^\uparrow$ with the relations
\bneq
\tilde{j}^2 = 1 \ , \hspace{1em} \tilde{j} (a,(\gamma,\omega)) \tilde{j} = (j\cdot a,(\bar{\gamma},-\omega)).
\eneq
Defining by $\tilde{j}\cdot\tilde{e}$ the equivalence class w.r.t. $j\cdot W$ of the path $t \mapsto j\cdot e(t)$ the element $\tilde{j}$ acts on $\tW$ according to
\bneq \label{jaction}
\tilde{j}\cdot\tW := (j\cdot W, \tilde{j}\cdot\tilde{e}),
\eneq
where $\tilde{j}\cdot\tilde{e}$ is still a path starting at the reference direction $e_0$ because we have chosen $e_0$ invariant under $j$.
\vspace{1em}

\subsection{The Deformed Model}
On the Hilbert space $\mH$ we again define the free charged scalar field $\Phi(f) = a^*(f^+) + b(\bar{f}^+)$ with $a$ and $b$ defined as in the previous chapter and $f^\pm(p) = \int dx f(\pm x) e^{ixp}$. This field is local and covariant w.r.t. the representation $U$ only for $\lambda = 0$. Now we want to deform this field using multiplicative deformations such that the deformed field $\Phi_{\tW}(f)$ is covariant w.r.t. $U$ for an arbitrary $\lambda\in\mathds{R}$ and localized in $supp f + \tW$, where for simplicity we only consider wedges having the origin contained in their edge because of translation covariance. Taking a $\tW' = (W',\tilde{e})$ with an arbitrary path $\tilde{e}$ ending in $W'$ and two test-functions $f,g$ such that $supp(f)+W$ is causally separated from $supp(g)+W'$ we want the fields to satisfy
\bneq
\Phi_{\tW}(f) \Phi^\sharp_{\tW'}(g) = e^{\mp 2\pi i \lambda k(\tW,\tW')} \Phi^\sharp_{\tW'}(g) \Phi_{\tW}(f).
\eneq
The statistics factor only depends on the two wedges $\tW, \tW'$ and we will see that it is related to the relative winding number $N(\tW,\tW ')$ defined below according to $k(\tW,\tW') = -2 N(\tW,\tW') - 1$. \\
General multiplication operators on $\mathcal{H}$ are of the form
\bneq
(T_{\tW}(p)\Psi)_n^m(\up) := A_{\tW}^{n,m}(p;\up) \Psi_n^m(\up),
\eneq
and they are used to deform the creation and annihilation operators,
\bneq
\begin{split}
a_{\tW}(p) &:= T_{\tW}(p) a(p) \ , \hspace{6.4em} a^*_{\tW}(p) := a_{\tW}(p)^* = a^*(p)T_{\tW}(p)^*, \\
b_{\tW}(p) &:= C a_{\tW}(p) C \equiv T^c_{\tW}(p) b(p) \ , \hspace{1em} b^*_{\tW}(p) := b_{\tW}(p)^*.
\end{split}
\eneq
The charge conjugated multiplication operator acts according to
\bneq
(T^c_{\tW}(p)\Psi)_n^m(\up) = A_{\tW}^{m,n}(p;\up^c) \Psi_n^m(\up),
\eneq
with $\up^c = (p_{n+1},...,p_{n+m},p_1,...,p_n)$. These definitions mean that for a one-particle vector $\varphi$ the annihilation operator e.g. acts as
\beq
(a_{\tW}(\varphi)\Psi)_n^m(\up) = \sqrt{n+1}\int d\mu(p)\overline{\varphi(p)} A_{\tW}^{n,m}(p;\up)\Psi_{n+1}^m(p,\up).
\eeq
The deformed field is then defined as $\Phi_{\tW}(f) = a^*_{\tW}(f^+)+ b_{\tW}(\bar{f}^+)$ and we want it to be covariant under $\tilde{\mathcal{P}}_+^\uparrow$, i.e.
\bneq
U(a,\tL) \Phi_{\tW}(f) U(a,\tL)^{-1} = \Phi_{\tL\tW}(\alpha_{(a,\Lambda)}(f)) \ , \hspace{2em} (\alpha_{(a,\Lambda)}(f))(x) = f(\Lambda^{-1} (x-a)).
\eneq
Therefore also the deformed annihilation operators have to satisfy $U(a,\tL) a_{\tW}(\varphi) U(a,\tL)^{-1} = a_{\tL\tW}(\alpha_{(a,\Lambda)}(\varphi))$ which leads to the following relation for the deformation functions,
\bneq
e^{-i\lambda \Omega_n^m(\tL,\up)}e^{-i\lambda(q+1)\Omega(\tL,p) }A_{\tW}^{n,m}(\Lambda^{-1}p;\Lambda^{-1}\up) = A_{\tL\tW}^{n,m}(p;\up).
\eneq
The product structure of this covariance condition and the known deformation functions in \cite{Lechner12} now motivate the following ansatz,
\bneq
A_{\tW}^{n,m}(p;\up) := u_{\tW}^\lambda(p)^{q+1}\prod_{i=1}^{n}u_{\tW}^\lambda(p_i)R(Qp\cdot p_i)\prod_{j=n+1}^{n+m}\overline{u_{\tW}^\lambda(p_j)} R(Qp\cdot p_j) ,
\eneq
where the functions $u$ and $R$ are defined in the following way. 
\begin{itemize}
\item $R$ is a ``standard'' deformation function in the sense of \cite{Lechner12} as already described in the two-dimensional case, i.e. it satisfies $R(-a) = \overline{R(a)} = R(a)^{-1}$ and it has an analytic continuation into the upper half plane, continuous on its closure, to guarantee the right commutation relations of the deformed fields for space-like separation. The $Q \equiv Q(W)$ in the argument of $R$ is a $W$ dependent ``deformation matrix'', which is \emph{anti-symmetric} w.r.t. the Lorentz inner product and its definition can be found e.g. in \cite{GrossLech07, GrossLech}. Denoting by $L_W$ a Lorentz transformation connecting $W$ and $W_0$, i.e. $W = L_W W_0$ \footnote{Such a $L_W$ always exists because the family of wedges has been defined as the orbit of $W_0$ under $\mathcal{L}_+^\uparrow$.}, the matrix $Q(W)$ is defined according to $Q(W) = L_W Q_0 L_W^{-1}$ where 
\bneq
Q_0 = \kappa \begin{pmatrix} 0 & 1 & 0 \\ 1 & 0 & 0 \\ 0 & 0 & 0 \end{pmatrix} \ , \hspace{1em} \kappa > 0
\eneq
is the matrix belonging to the standard wedge $W_0$. This dependence on $W$ of the matrix $Q$ ensures the right covariance properties of the deformation functions and it follows in particular that $Q(W') = -Q(W)$.

\item The functions $u_{\tW}^\lambda$ are intertwiners for the representation $U^{(\lambda)}$ of $\mathcal{L}_+^\uparrow$ which have to satisfy the relation
\bneq \label{intrel}
e^{-i\lambda\Omega(\tL,p)}u_{\tW}^\lambda(\Lambda^{-1}p) = u_{\tL\tW}^\lambda(p).
\eneq
To find such functions $u$ satisfying this condition we first define the path for the standard wedge $\tW_0 = (W_0,\tilde{e}_0)$ with $W_0 = \{x\in\mathds{R}^3\mid x_1>|x_0|\}$ and $\tilde{e}_0$ is a path starting at $e_0$ and staying inside $W_0$.  We then consider equation \eqref{intrel} for $\tW_0$ and $\tL = \tilde{L}_{\tW}$, where $\tilde{L}_{\tW}$ is a Lorentz transformation connecting $\tW$ and $\tW_0$, i.e. $\tilde{L}_{\tW}\tW_0 = \tW$. We then get
\bneq
u_ {\tW}^\lambda(p) = e^{-i\lambda\Omega(\tilde{L}_{\tW},p)} u_{\tW_0}^\lambda(\tilde{L}_{\tW}^{-1}p),
\eneq
which shows that the intertwiner for $\tW$ is determined by the intertwiner $u_0^\lambda := u_{\tW_0}^\lambda$ for the standard wedge $\tW_0$. But this construction is not unique, because for every $\tilde{L}_{\tW}$ also $\tilde{L}_{\tW}\Lambda_1(t)$ is a $\mathcal{L}_+^\uparrow$ transformation mapping $\tW_0$ to $\tW$, because the boosts $\Lambda_1$ in 1-direction leave the standard wedge $\tW_0$ invariant. To restore uniqueness the functions $u_0^\lambda$ need to satisfy the consistency condition
\bneq
u_0^\lambda(p) = e^{-i\lambda\Omega(\Lambda_1(t),p)} u_0^\lambda(\Lambda_1(-t)p).
\eneq
To construct such functions we observe that, according to \cite[Lemma C.1]{Mund03}, the Wigner rotation factor $e^{-i\lambda\Omega(\Lambda_1(t),p)}$ can be written as
\bneq
e^{-i\lambda\Omega(\Lambda_1(t),p)} = v(p)^\lambda v(\Lambda(-t)p)^{-\lambda} \ , \hspace{2em} v(p) := \frac{p_0+m-p_1+ip_2}{p_0+m-p_1-ip_2}.
\eneq
This leads to the solution
\bneq
u_0^\lambda(p) := f(p_2)^\lambda v(p)^\lambda
\eneq
for $u_0^\lambda$ where $f$ is a yet undefined function of $p_2$ and thus invariant under $x_1$-boosts.\footnote{Note that every function on the mass shell which is invariant under boosts in the 1-direction is a function of $p_2$.} One can now easily check that our intertwiner function $u_{\tW}^\lambda(p) := e^{-i\lambda\Omega(\tilde{L}_{\tW},p)} u_0^\lambda(L_{\tW}^{-1}p)$ satisfies relation \eqref{intrel}. \\
Note that (apart from $R$ which determines the momentum dependence of the S-matrix) the deformation function $A_{\tW}^{n,m}$ is fixed by covariance up to the function $f(p_2)$ which has to be chosen in such a way that the creation and annihilation operators satisfy the right commutation relations.
\end{itemize}
\vspace{1em}
To calculate these relations we first need commutation relations between undeformed creation{\slash}annihilation operators and the deformation operators $T_{\tW}$. Acting on the charge $q$ Hilbert space we get e.g.
\bneq
\begin{split}
\left.T_{\tW}(p) a(p') \right|_q &= \frac{u_{\tW}^\lambda(p)^{q-1}}{u_{\tW}^\lambda(p)^{q}} (u_{\tW}^\lambda(p') R(Qp\cdot p'))^{-1} \left.a(p') T_{\tW}(p)\right|_q \\
&= u_{\tW}^\lambda(p)^{-1} u_{\tW}^\lambda(p')^{-1} R(Qp\cdot p')^{-1} a(p') T_{\tW}(p) =: B_{\tW}(p,p') a(p') T_{\tW}(p), \\
T_{\tW}(p) a^*(p') &= B_{\tW}(p,p')^{-1} a^*(p') T_{\tW}(p).
\end{split}
\eneq
In the same way we can calculate
\bneq
T_{\tW}(p) b(p') = u_{\tW}^\lambda(p) \overline{u_{\tW}^\lambda(p')}^{-1} R(Qp\cdot p')^{-1} b(p') T_{\tW}(p) =: C_{\tW}(p,p') b(p') T_{\tW}(p).
\eneq
All other relations now follow by charge conjugation and taking adjoints. Using these commutation relations we now obtain
\bneq
\begin{split}
a_{\tW}(p) a_{\tW '}(p') &= T_{\tW}(p) a(p) T_{\tW '}(p') a(p')\\
&= B_{\tW '}(p',p)^{-1} B_{\tW}(p,p') a_{\tW '}(p') a_{\tW}(p).
\end{split}
\eneq
Inserting the definition of $B$ we get
\bneq
B_{\tW '}(p',p)^{-1} B_{\tW}(p,p') = \frac{u_{\tW '}^\lambda(p') u_{\tW '}^\lambda(p) R(Qp'\cdot p)}{u_{\tW}^\lambda(p) u_{\tW}(p') R(Qp\cdot p')} =  \frac{u_{\tW '}^\lambda(p)}{u_{\tW}^\lambda(p)} \frac{u_{\tW '}^\lambda(p')}{u_{\tW}^\lambda(p')},
\eneq
where the last equation follows by using the relation $R(-a) = R(a)^{-1}$. From the definition of the $u$'s we see that
\bneq \label{uoveru}
\frac{u_{\tW '}^\lambda(p)}{u_{\tW}^\lambda(p)} = e^{-i\lambda (\Omega(\tilde{L}_{\tW '},p) - \Omega(\tilde{L}_{\tW},p))} u_0^\lambda(L_{\tW '}^{-1} p) u_0^\lambda(L_{\tW}^{-1}p)^{-1},
\eneq
and we are going to show that this term equals $e^{-i\pi\lambda k(\tW,\tW')}$ with a $k\in\mathds{Z}$, depending only on the wedges $\tW, \tW'$ and not on the momenta $p, p'$. \\
To calculate this expression we need to look more closely at the relation between $\tilde{L}_{\tW}$ and $\tilde{L}_{\tW '}$. First note that if $W = L_W W_0$ then $W' = L_W W_0' = L_W r(\pi) W_0$. Lifting this to paths of wedges we see that for every $\tW'$ there is an odd number $k\in 2\mathds{Z}+1$ such that $\tW' = \tilde{L}_{\tW} \tilde{r}(k\pi) \tW_0$.
This shows that every $\tilde{L}_{\tW '}$ is of the form $\tilde{L}_{\tW '} = \tilde{L}_{\tW} \tilde{r}(k\pi) \Lambda_1(t)$, where $\Lambda_1$ is a boost in $x_1$ direction. Using that the intertwiners do not depend on the choice of this $\Lambda_1(t)$ we obtain
\beq
\frac{u_{\tW '}^\lambda(p)}{u_{\tW}^\lambda(p)} = e^{-i\lambda (\Omega(\tilde{L}_{\tW}\tilde{r}(k\pi),p) - \Omega(\tilde{L}_{\tW},p))} u_0^\lambda(r(-k\pi)L_{\tW}^{-1} p) u_0^\lambda(L_{\tW}^{-1}p)^{-1}.
\eeq
To calculate the exponential factor we need the cocycle relation \eqref{cocyclerel} of the Wigner rotation factors
\beq
\Omega(\tL \tL ',p) = \Omega(\tL,p) + \Omega(\tL ',\Lambda^{-1}p).
\eeq
This leads to
\bneq
\begin{split}
\Omega(\tilde{L}_{\tW}\tilde{r}(k\pi),p) - \Omega(\tilde{L}_{\tW},p) &= \Omega(\tilde{L}_{\tW},p) + \Omega(\tilde{r}(k\pi),L_{\tW}^{-1}p)-\Omega(\tilde{L}_{\tW},p) \\
&= \Omega(\tilde{r}(k\pi),L_{\tW}^{-1}p) = k\pi,
\end{split}
\eneq
where $k \equiv k(\tW,\tW ')$ obviously only depends on the winding number of $\tW '$ w.r.t $\tW$. Such a winding number, $N(\tilde{C}_1,\tilde{C}_2)$, can be defined for general causally separated spacelike cones\footnote{A spacelike cone $C = a + \cup_{\nu\geq 0}\nu \mathcal{O}$ is defined through its apex $a$ and a double cone $\mathcal{O}$, which is spacelike separated from $a$.} $\tilde{C}_1$ and $\tilde{C}_2$ (see e.g. \cite{BrosMund}) in the following way. Let $d\theta$ be the angle one-form in some fixed Lorentz frame, and for a path $\tilde{C} = (C,\tilde{e})$ let $\theta(\tilde{C})$ be the set of corresponding ``accumulated angles'', namely the interval
\bneq
\theta(\tilde{C}) := \left\{\int_e d\theta : e \in \tilde{e}\right\}.
\eneq
Now given two paths $\tilde{C}_1, \tilde{C}_2$ with $C_1$ causally separated from $C_2$, one can define the relative winding number $N(\tilde{C}_1,\tilde{C}_2)$ of $\tilde{C}_2$ w.r.t. $\tilde{C}_1$ to be the unique integer $n$
such that
\bneq
\theta(\tilde{C}_2) + 2\pi n < \theta(\tilde{C}_1) < \theta(\tilde{C}_2) + 2\pi(n+1).
\eneq
Considering two wedges $\tW$ and $\tW '$  one can proof the following relation between $k(\tW,\tW ')$ and $N(\tW, \tW ')$.
\begin{lemma}
Let $k(\tW,\tW ')$ and $N(\tW,\tW ')$ be defined as before. Then the relation
\bneq
-k(\tW,\tW ') = 2 N(\tW,\tW ') + 1
\eneq
holds.
\end{lemma}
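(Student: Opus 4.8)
The plan is to reduce the computation to the standard wedge configuration, where no boosts are present and the accumulated angles can be read off directly, by exploiting the invariance of the relative winding number under the diagonal action of $\tilde{\mathcal{L}}_+^\uparrow$. Concretely, I would first establish that $N(\tilde{g}\cdot\tilde{C}_1,\tilde{g}\cdot\tilde{C}_2)=N(\tilde{C}_1,\tilde{C}_2)$ for every $\tilde{g}\in\tilde{\mathcal{L}}_+^\uparrow$ and every causally separated pair of paths of wedges. Since $\tilde{\mathcal{L}}_+^\uparrow$ is connected, I would join $\tilde{g}$ to the identity by a path $s\mapsto\tilde{g}(s)$ and follow the two continuously varying intervals $\theta(\tilde{g}(s)\tilde{C}_1)$ and $\theta(\tilde{g}(s)\tilde{C}_2)$: causal complementarity of the two wedges is preserved along the path, so their direction arcs stay disjoint and the two defining strict inequalities never degenerate, whence the integer $N$ they determine is locally constant and therefore constant. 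Applying this with $\tilde{g}=\tilde{L}_{\tW}^{-1}$ and using $\tW=\tilde{L}_{\tW}\tW_0$, $\tW'=\tilde{L}_{\tW}\tilde{r}(k\pi)\tW_0$ collapses the statement to $N(\tW,\tW')=N(\tW_0,\tilde{r}(k\pi)\tW_0)$.

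Next I would compute the two accumulated-angle intervals in the fixed standard frame, where everything is explicit. Writing a unit spacelike vector as $(t,\sqrt{1+t^2}\cos\phi,\sqrt{1+t^2}\sin\phi)$ with time component $t$, the condition that it lie inside $W_0$ (first spatial component larger than $|t|$) gives, after taking the union over $t$, the open arc $\phi\in(-\pi/2,\pi/2)$ of angular width $\pi$; hence relative to the reference direction $e_0=(0,0,-1)$ the set $\theta(\tW_0)$ is an open interval of length $\pi$, say $(a,a+\pi)$, whose precise position $a$ will turn out to be irrelevant. Since a rotation by $k\pi$ shifts the angle one-form integral by exactly $k\pi$ (consistent with $\Omega(\tilde{r}(k\pi),\cdot)=k\pi$ established above), this yields $\theta(\tilde{r}(k\pi)\tW_0)=(a+k\pi,a+(k+1)\pi)$.

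I would then substitute these intervals into the defining inequality $\theta(\tilde{C}_2)+2\pi N<\theta(\tilde{C}_1)<\theta(\tilde{C}_2)+2\pi(N+1)$ with $\tilde{C}_1=\tW_0$ and $\tilde{C}_2=\tilde{r}(k\pi)\tW_0$. The two comparisons of the (open, adjacent) intervals read $(k+1)\pi+2\pi N\le 0$ and $\pi\le k\pi+2\pi(N+1)$; the offset $a$ drops out, and together they force $N=-(k+1)/2$, i.e. $-k=2N+1$, which is the asserted relation. The oddness of $k$ guarantees that $(k+1)/2\in\mathds{Z}$, consistently with $N\in\mathds{Z}$.

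I expect the main obstacle to be the rigorous justification of the invariance of $N$ in the first step, in particular verifying that the defining inequalities genuinely stay non-degenerate along the whole homotopy: one must check that the two direction arcs remain complementary, touching only at the two lightlike directions, so that the corresponding open intervals never overlap. A secondary pitfall is fixing the orientation conventions consistently, since the sign of the angle one-form $d\theta$, the direction in which $\tilde{r}(k\pi)$ sweeps $\phi$, and the interval selected by the reference path $\tilde{e}_0$ all have to be aligned so that the shift is $+k\pi$ rather than $-k\pi$; the wrong choice would flip a sign and produce $+k=2N+1$ instead.
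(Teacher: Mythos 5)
Your proof is correct and follows essentially the same route as the paper's: both reduce the claim to the relation $\theta(\tW')=\theta(\tW)+k\pi$ between the accumulated-angle intervals and then substitute into the defining inequalities for $N$, with the endpoint bookkeeping forcing $-k=2N+1$. Your preliminary steps --- Lorentz invariance of $N$ via a connectedness argument and the explicit width-$\pi$ arc of $W_0$ --- simply make rigorous what the paper asserts in one line (that $\tW'=\tilde{L}_{\tW}\tilde{r}(k\pi)\tW_0$ yields the $k\pi$ shift of $\theta$), so your version is a slightly more careful rendering of the same argument rather than a different one.
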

\begin{proof}
As we have seen $\tW ' = \tilde{L}_{\tW} \tilde{r}(k\pi) \tW_0$, with $\tW = \tilde{L}_{\tW} \tW_0$, which shows that \mbox{$\theta(\tW ') = \theta(\tW) + k\pi$}. Using this in the definition of the winding number leads to
\beq
\begin{split}
\theta(\tW ') + 2\pi N &< \theta(\tW) < \theta(\tW ') + 2\pi(N+1) \\
\Rightarrow \theta(\tW) + k\pi + 2\pi N &< \theta(\tW) < \theta(\tW) + k\pi + 2\pi(N+1).
\end{split}
\eeq 
From this it follows that
\beq
k+ 2N < 0 < k+2(N+1)
\eeq
which immediately leads to $-k = 2N + 1$.
\end{proof}
Having established the relation between our $k(\tW,\tW ')$ and the usual definition of the winding number $N(\tW,\tW ')$ we return to the calculation of $\frac{u_{\tW '}^\lambda(p)}{u_{\tW}^\lambda(p)}$. To determine the second factor in \eqref{uoveru} we need to compute the expression
\beq
\frac{u_0^\lambda(r(-k\pi)p)}{u_0^\lambda(p)} = \frac{u_0^\lambda(r(\pi)p)}{u_0^\lambda(p)},
\eeq
where $r(\pi)$ acts on $p$ simply as $r(\pi)(p_0,p_1,p_2) = (p_0,-p_1,-p_2)$. Inserting the definition of $u_0^\lambda$ and restricting the momentum $p$ to the forward mass shell $H_m^+$ leads via a straightforward calculation to
\beq
\left.\frac{u_0^1(r(\pi)p)}{u_0^1(p)}\right|_{H_m^+} = \left.\frac{f(-p_2) v(r(\pi)p)}{f(p_2)v(p)}\right|_{H_m^+} = \frac{f(-p_2)(m-ip_2)}{f(p_2)(m+ip_2)}.
\eeq
In order to guarantee the right commutation relations between our annihilation operators we therefore have to choose a function $f: \mathds{R} \rightarrow \mathds{C}$ satisfying
\bneq \label{condf}
\frac{f(-\kappa)(m-i\kappa)}{f(\kappa)(m+i\kappa)} = 1.
\eneq 
By studying the commutation relations between other creation and annihilation operators (e.g. between $a_{\tW}(p)$ and $b_{\tW '}(p')$) one realizes that $f$ also has to satisfy $\bar{f} = f^{-1}$, i.e. $|f| = 1$.
Using such an $f$ in the definition of $u_{\tW}^\lambda$ one then finally arrives at the desired relation
\bneq
\frac{u_{\tW '}^\lambda(p)}{u_{\tW}^\lambda(p)} = e^{-i\lambda \pi k(\tW,\tW ')}
\eneq
which leads to
\bneq
a_{\tW}(p) a_{\tW '}(p') = e^{-2\pi i\lambda k(\tW,\tW ')} a_{\tW '}(p') a_{\tW}(p).
\eneq
In exactly the same way one can calculate
\bneq
a_{\tW}(p) b_{\tW '}(p') = e^{2\pi i\lambda k(\tW,\tW ')} b_{\tW '}(p') a_{\tW}(p)
\eneq
and the mixed commutation relations
\bneq
\begin{split}
a_{\tW}(p) a_{\tW '}^*(p') &= e^{2\pi i\lambda k(\tW,\tW ')} a_{\tW '}^*(p') a_{\tW}(p) + \omega_p \delta(p-p') T_{\tW}(p) T_{\tW '}(p)^*, \\
a_{\tW}(p) b_{\tW '}^*(p') &= e^{-2\pi i\lambda k(\tW,\tW ')} b_{\tW '}^*(p') a_{\tW}(p).
\end{split}
\eneq
Again all the other commutation relations follow by charge conjugation and taking adjoints. 
\vspace{1em} \\
Up to now we have only taken into account covariance under the proper orthochronous group $\tilde{\mathcal{P}}_+^\uparrow$, but we also want our fields to have the correct transformation behavior under reflections at the edge of the wedge. Taking the standard wedge $\tW_0$ and the reflection $\tilde{j}$ at its edge we want the field to satisfy
\bneq \label{jtransformation}
J \Phi_{\tW_0}(f) J = \Phi_{\tilde{j}\tW_0}(\alpha_j(f)),
\eneq 
where again $\alpha_j(f)(x) = \overline{f(jx)}$ and $J$ has been defined in \eqref{jactionbeta}. According to \eqref{jaction} the reflection $\tilde{j}$ acts on $\tW_0$ as
\beq
\tilde{j}\cdot\tW_0 = (-W_0,\tilde{j}\cdot\tilde{e}_0) = \tilde{r}(-\pi)\tW_0.
\eeq
A straightforward calculation then shows that for equation \eqref{jtransformation} to hold we need to set $\beta = -2\pi\lambda$ in the definition \eqref{jactionbeta} and the intertwiners $u^\lambda_0$ have to satisfy $\overline{u^\lambda_0(-jp)} = u^\lambda_0(p)$. Because of $\tilde{j}\cdot\tW_0 = \tilde{r}(-\pi)\cdot\tW_0$ this leads to
\beq
\overline{u^\lambda_0(-jp)} =e^{-i\pi\lambda} u_{\tilde{j}\tW_0}(p).
\eeq
This equation can be fulfilled if and only if the function $f$, used in the definition of $u^\lambda_0$, satisfies $f(-\kappa) = \overline{f(\kappa)}$ in addition to the previous relation \eqref{condf}. These two conditions now lead to the solution
\bneq
f(\kappa) = \pm \frac{m-i\kappa}{\sqrt{m^2+\kappa^2}},
\eneq
which is unique up to a sign.
With this choice the deformed field $\Phi_{\tW_0}$ then satisfies \eqref{jtransformation} with the space-time reflection $J$ defined according to
\bneq
(J\Psi)_n^m(\up) = e^{-2\pi i\lambda q}\, \overline{\Psi_n^m(-j\up)}.
\eneq \vspace{1em}

Summing up our construction we have seen that we can deform the CCR-algebra of ``free'' creation and annihilation operators in such a way that the deformed operators satisfy anyonic commutation relations and the resulting field is covariant under a spin $\lambda$ representation of $\tilde{\mathcal{P}}_+$. This deformation was defined by simply using multiplication operators $(T_{\tW}(p)\Psi)_n^m(\up) = A_{\tW}^{n,m}(p;\up) \Psi_n^m(\up)$ which were chosen according to
\beq
\begin{split}
&A_{\tW}^{n,m}(p;\up) = u_{\tW}^\lambda(p)^{q+1}\prod_{i=1}^{n}u_{\tW}^\lambda(p_i)R(Qp\cdot p_i)\prod_{j=n+1}^{n+m}\overline{u_{\tW}^\lambda(p_j)} R(Qp\cdot p_j), \\
&R \ \ ...\ \text{ arbitrary deformation function in the sense of \cite{Lechner12}},\\
&u_ {\tW}^\lambda(p) = e^{-i\lambda\Omega(\tilde{L}_{\tW},p)} u_0^\lambda(\tilde{L}_{\tW}^{-1}p), \\
&u_0^\lambda(p) = \bigg(\frac{m-ip_2}{\sqrt{m^2+p_2^2}}\bigg)^\lambda\cdot v(p)^\lambda \ \ , \hspace{1.5em} v(p) := \frac{p_0+m-p_1+ip_2}{p_0+m-p_1-ip_2}.
\end{split}
\eeq \vspace{1em} \\
Having defined the deformation we can now state our main result, namely that the deformed field satisfies anyonic commutation relations.
\begin{theorem}
Consider paths of wedges $\tW$, $\tW '$ and test functions $f, g$ such that 
\bneq \label{suppprop}
{\operatorname{supp}(f) + W} \subset (\operatorname{supp}(g) + W')'.
\eneq
Then the deformed fields $\Phi_{\tW}(f)$ and $\Phi_{\tW '}(g)$ satisfy the commutation relations
\bneq
\begin{split}
\Phi_{\tW}(f) \Phi_{\tW '}(g) &= e^{-2\pi i\lambda k(\tW,\tW ')}\Phi_{\tW '}(g) \Phi_{\tW}(f), \\
\Phi_{\tW}(f) \Phi^*_{\tW '}(g) &= e^{2\pi i\lambda k(\tW,\tW ')}\Phi^*_{\tW '}(g) \Phi_{\tW}(f),
\end{split}
\eneq
\end{theorem}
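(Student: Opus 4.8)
The plan is to reduce both identities to the commutation relations among the deformed creation and annihilation operators established above, and then to isolate the ``contraction'' (delta-function) terms and show that the support hypothesis \eqref{suppprop} forces them to cancel. Throughout I use that the statistics factor $k(\tW,\tW')\in 2\mathds{Z}+1$ does not depend on the momenta, so it passes freely through the smearing integrals, and that it is antisymmetric: since the preceding Lemma is based on the identity $\theta(\tW')=\theta(\tW)+k(\tW,\tW')\pi$, reading this relation in the opposite order gives $k(\tW',\tW)=-k(\tW,\tW')$ (equivalently $N(\tW,\tW')+N(\tW',\tW)=-1$). This antisymmetry is exactly what makes the elementary products below carry one and the same phase.

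For the first relation I would expand
\[
\Phi_{\tW}(f)\Phi_{\tW'}(g)=a^*_{\tW}(f^+)a^*_{\tW'}(g^+)+a^*_{\tW}(f^+)b_{\tW'}(\bar g^+)+b_{\tW}(\bar f^+)a^*_{\tW'}(g^+)+b_{\tW}(\bar f^+)b_{\tW'}(\bar g^+).
\]
Each of the four products pairs either two creators, two annihilators, or an $a^\sharp$ with a $b^\sharp$ (distinct charge sectors), so none of them produces a contraction term. Applying the pure-phase relations for $a^\sharp_{\tW}a^\sharp_{\tW'}$, $a^*_{\tW}b_{\tW'}$, $b_{\tW}a^*_{\tW'}$ and $b^\sharp_{\tW}b^\sharp_{\tW'}$, and using $k(\tW',\tW)=-k(\tW,\tW')$ to bring the two ``mixed-wedge'' factors to the same exponent, each product acquires the common factor $e^{-2\pi i\lambda k(\tW,\tW')}$ when the $\tW$- and $\tW'$-operators are interchanged. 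Summing yields the first identity, with no further input needed; in particular this relation is purely algebraic.

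The second relation is where the locality analysis enters. Writing $\eta:=e^{2\pi i\lambda k(\tW,\tW')}$ and expanding $\Phi_{\tW}(f)\Phi^*_{\tW'}(g)-\eta\,\Phi^*_{\tW'}(g)\Phi_{\tW}(f)$, the $a^*b^*$ and $ba$ products are again contraction-free and cancel by the phase relations, whereas the particle pair $a^*_{\tW}(f^+)a_{\tW'}(\bar g^+)$ and the antiparticle pair $b_{\tW}(\bar f^+)b^*_{\tW'}(g^+)$ each leave a contraction. Using the mixed relations together with $\eta\,e^{-2\pi i\lambda k(\tW,\tW')}=1$, the regular parts cancel and one is left with the difference of two on-shell integrals of exactly the shape of \eqref{fieldcommrel}, schematically
\[
-\eta\!\int\! d\mu(p)\,f^+(p)g^-(p)\,\omega_p\,T_{\tW'}(p)T_{\tW}(p)^*+\int\! d\mu(p)\,f^-(p)g^+(p)\,\omega_p\,T^c_{\tW}(p)T^c_{\tW'}(p)^*,
\]
and it remains to show that this expression annihilates every $\Psi_n^m$ in the domain.

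This last step is the crux, and is the three-dimensional analogue of the contour-shift argument of \cite{Lechner12, Lechner03}. The hypothesis \eqref{suppprop}, together with $Q(W')=-Q(W)$, is what makes it work: because $\operatorname{supp}(f)+W$ lies in the causal complement of $\operatorname{supp}(g)+W'$, the Fourier data $f^\pm,g^\pm$ are boundary values of functions analytic in the tube domains attached to $W$ and $W'$, and $R$ continues analytically into the upper half plane with the crossing relation used in \cite{Lechner12}. I would then shift the $p$-integration into the complex domain (the higher-dimensional replacement of $\theta\mapsto\theta+i\pi$), whereupon the analytic continuation of the factors $R(Qp\cdot p_j)$ combined with $Q(W')=-Q(W)$ turns the first integrand into the negative of the second, so the two terms cancel on the dense domain of $\Phi$. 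The main obstacle is precisely the justification of this contour shift: unlike in $d=1+1$ the deformation is governed by the antisymmetric matrix $Q(W)$ rather than a single rapidity difference, the relevant tubes are higher-dimensional, and the integrand now also carries the momentum-dependent intertwiner factors $u^\lambda_{\tW}(p_j)$ coming from the Wigner rotations. The bulk of the work is therefore to verify that the full integrand, test functions times the $R(Qp\cdot p_j)$-products times the $u$-factors, is jointly analytic and decays sufficiently in the shifted region for Cauchy's theorem to apply, and that the $u$-factors of $T_{\tW'}(p)T_{\tW}(p)^*$ match those of $T^c_{\tW}(p)T^c_{\tW'}(p)^*$ after the shift.
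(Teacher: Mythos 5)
Your algebraic reduction matches the paper's: the $\Phi\Phi$ relation is indeed purely algebraic and holds for arbitrary test functions, the antisymmetry $k(\tW',\tW)=-k(\tW,\tW')$ you extract from $\theta(\tW')=\theta(\tW)+k\pi$ is correct, and the $\Phi\Phi^*$ commutator leaves exactly the two contraction integrals you display (your extra $\omega_p$ is only a delta-normalization convention). The genuine gap is in the step you yourself flag as ``the crux'': you propose to continue the \emph{full} integrand, including the intertwiner factors $u^\lambda_{\tW}$, into the complex domain and to check that the $u$-factors of $T_{\tW'}(p)T_{\tW}(p)^*$ match those of $T^c_{\tW}(p)T^c_{\tW'}(p)^*$ ``after the shift''. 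This is not merely left unverified; it misidentifies the structure of the argument. Acting on $\Psi_n^m$, the spectator factors $\overline{u^\lambda_{\tW}(p_i)}\,u^\lambda_{\tW'}(p_i)$ (and their conjugates for the antiparticle slots) are \emph{identical} in the two operator products and factor out of the $p$-integral, while the factors at the integration variable, $u_{\tW}^\lambda(p)^{-q+1}\overline{u_{\tW'}^\lambda(p)^{-q+1}}$ and $\overline{u_{\tW}^\lambda(p)^{q+1}}\,u_{\tW'}^\lambda(p)^{q+1}$, are \emph{momentum-independent} phases, equal to $e^{-i\pi\lambda k(\tW,\tW')(\mp q+1)}$, by the identity $u^\lambda_{\tW'}(p)/u^\lambda_{\tW}(p)=e^{-i\pi\lambda k(\tW,\tW')}$ --- the very identity that the choice $f(\kappa)=\pm(m-i\kappa)/\sqrt{m^2+\kappa^2}$ (together with $|f|=1$) was engineered to enforce. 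These phases absorb $e^{2\pi i\lambda k(\tW,\tW')}$, so the two integrands differ only in $f^\mp g^\pm$ and in $R^2$ versus $\overline{R}^2$; the $u$-factors never enter the analytic continuation at all.

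Once this is seen, the remaining step is the standard one-variable Grosse--Lechner contour shift, not a higher-dimensional tube problem: by covariance reduce to $W_0$, parametrize $p=p(\theta,p_2)$ with $m_\perp=\sqrt{m^2+p_2^2}$, note $\operatorname{Im}(Qp(\theta+is)\cdot p_k)\geq 0$ for $0\leq s\leq\pi$ (so $z\mapsto R(Qp(z)\cdot p_k)$ is analytic in $S(0,\pi)$ and bounded on its closure), and shift $\theta\to\theta+i\pi$ at fixed $p_2$; since $Qp(\theta+i\pi)\cdot p_k=-Qp(\theta)\cdot p_k$ and $R(-a)=\overline{R(a)}$, the boundary relations $f^-(\theta+i\pi,p_2)=f^+(\theta,-p_2)$ and $g^+(\theta+i\pi,p_2)=g^-(\theta,-p_2)$ turn the first term into the second after the substitution $p_2\to-p_2$. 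Your proposed route, by contrast, would require continuing $v(p)^\lambda$ for non-integer $\lambda$ into the strip, where the numerator and denominator of $v$ develop zeros, so $v^\lambda$ has branch points there; ``verifying joint analyticity of the $u$-factors'' is thus an obstruction, not a technical chore, and the proof succeeds precisely because the cancellation of the $u$-factors on the real mass shell makes their continuation unnecessary.
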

\begin{proof}
Using the above relations between the deformed creation and annihilation operators one immediately sees that the deformed field $\Phi_{\tW}(f) = a^*_{\tW}(f^+) + b_{\tW}(\bar{f}^+)$ with itself satisfies the commutation relation
\bneq
\Phi_{\tW}(f) \Phi_{\tW '}(g) = e^{-2\pi i\lambda k(\tW,\tW ')}\Phi_{\tW '}(g) \Phi_{\tW}(f),
\eneq
for arbitrary test functions $f$ and $g$. \\
For the mixed commutation relations between $\Phi_{\tW}$ and $\Phi^*_{\tW '}$ we get
\bneq \label{fieldcommrel3d}
\begin{split}
&\Phi_{\tW}(f) \Phi^*_{\tW '}(g) - e^{2\pi i \lambda k(\tW,\tW')} \Phi^*_{\tW '}(g) \Phi_{\tW}(f) = \\
\int d\mu(p) &\left(f^-(p) g^+(p)\ T^c_{\tW}(p) T^c_{\tW '}(p)^* -  f^+(p) g^-(p)\ e^{2\pi i\lambda k(\tW,\tW ')} T_{\tW '}(p) T_{\tW}(p)^*\right).
\end{split}
\eneq
To determine whether the right hand side vanishes for spacelike separated testfunctions we have to calculate how the operators $T^c_{\tW}(p) T^c_{\tW '}(p)^*$ and $T_{\tW '}(p) T_{\tW}(p)^*$ act on an arbitrary vector $\Psi_n^m$. This leads to
\beq
\begin{split}
(T^c_{\tW}(p) T^c_{\tW '}(p)^* \Psi)_n^m(\up) = u_{\tW}^\lambda(p)^{-q+1} \overline{u_{\tW '}^\lambda(p)^{-q+1}}&\ \prod_{i=1}^n \overline{u_{\tW}^\lambda(p_i)} u_{\tW '}^\lambda(p_i) R(Qp\cdot p_i)^2 \\ \cdot 
&\prod_{j=n+1}^{n+m} u_{\tW}^\lambda(p_j) \overline{u_{\tW '}^\lambda(p_j)}R(Qp\cdot p_j)^2\, , \\
(T_{\tW '}(p) T_{\tW}(p)^* \Psi)_n^m(\up)\, = \, \overline{u_{\tW}^\lambda(p)^{q+1}}\, u_{\tW '}^\lambda(p)^{q+1}&\ \prod_{i=1}^n \overline{u_{\tW}^\lambda(p_i)} u_{\tW '}^\lambda(p_i) \overline{R(Qp\cdot p_i)}^2 \\ \cdot 
&\prod_{j=n+1}^{n+m} u_{\tW}^\lambda(p_j) \overline{u_{\tW '}^\lambda(p_j)}\overline{R(Qp\cdot p_j)}^2 \, .
\end{split}
\eeq
One can see that the factors containing $u^\lambda(p_i)$ and $u^\lambda(p_j)$ are the same in both equations, so they are not causing any trouble. The deformation function $R$ has been chosen in such a way that it has the right analytic properties, as in the two-dimensional case (cf. also \cite{GrossLech}). The only nontrivial factors left which could still cause problems are $u_{\tW}^\lambda(p)^{-q+1} \overline{u_{\tW '}^\lambda(p)^{-q+1}}$ and $\overline{u_{\tW}^\lambda(p)^{q+1}}\, u_{\tW '}^\lambda(p)^{q+1}$. But a straightforward calculation using the definition of $u_{\tW}^\lambda$ leads to
\beq
u_{\tW}^\lambda(p)^{-q+1} \overline{u_{\tW '}^\lambda(p)^{-q+1}} = e^{-i\pi\lambda k(\tW,\tW ')(-q+1)} = e^{2\pi i\lambda k(\tW,\tW ')}\overline{u_{\tW}^\lambda(p)^{q+1}}\, u_{\tW '}^\lambda(p)^{q+1} .
\eeq
Inserting this into the commutation relation \eqref{fieldcommrel3d} one arrives at
\bneq \label{commrelproof}
\begin{split}
&\hspace{-1cm}\left(\Phi_{\tW}(f) \Phi^*_{\tW '}(g)\Psi\right)_n^m(\up) - \left(e^{2\pi i \lambda k(\tW,\tW')} \Phi^*_{\tW '}(g) \Phi_{\tW}(f)\Psi\right)_n^m(\up) = \\
&\int d\mu(p) \Big[f^-(p) g^+(p) \prod_{i=1}^{n+m} R(Qp\cdot p_i)^2 - f^+(p) g^-(p) \prod_{i=1}^{n+m} \overline{R(Qp\cdot p_i)}^2 \Big]\\
&\cdot e^{-i\pi\lambda k(\tW,\tW ')} \prod_{i=1}^n \overline{u_{\tW}^\lambda(p_i)} u_{\tW '}^\lambda(p_i) \prod_{j=n+1}^{n+m} u_{\tW}^\lambda(p_i) \overline{u_{\tW '}^\lambda(p_i)} \ \ \Psi_n^m(\up).
\end{split}
\eneq
We now just have to show that the expression in the second line vanishes for all $p_i$ if the test functions $f$ and $g$ have the right support properties \eqref{suppprop}. Due to the covariance of the field operators we have to consider this expression only for the standard wedge $W_0$, i.e. $Q=Q_0$ and $f, g$ localized in $W_0, W_0'$ respectively. Using ideas from the proof of Proposition 3.4 in \cite{GrossLech07} one introduces new coordinates on the mass shell such that
\bneq
p = p(\theta,p_2) = \begin{pmatrix} m_\perp \cosh{\theta} \\ m_\perp \sinh{\theta} \\ p_2 \end{pmatrix} \ , \hspace{1cm} \int d\mu(p) \varphi(p)= \int \frac{1}{2} d\theta dp_2\ \varphi(p(\theta,p_2)),
\eneq
where $m_\perp = \sqrt{m^2+p_2^2}$. Because $f$ and $g$ are assumed to have compact support their Fourier transforms $\tilde{f}$ and $\tilde{g}$ are entire analytic functions. Moreover $f^-(\theta+i\lambda,p_2)$ and $g^+(\theta+i\lambda,p_2)$ are bounded on the strip $0\leq\lambda\leq\pi$ (see again \cite{GrossLech07}) and the boundary values are related by
\beq
f^-(\theta+i\pi,p_2) = f^+(\theta,-p_2) \ , \hspace{1em} g^+(\theta+i\pi,p_2) = g^-(\theta,-p_2),
\eeq
where we used the obvious notation $f^\pm(\theta,p_2) = \tilde{f}(\pm p(\theta))$.
Furthermore, because $\kappa \geq 0$ and all the momenta are on the mass shell, it follows that
\beq
\operatorname{Im}(Q p(\theta+i\lambda)\cdot p_k) = \kappa m_\perp \sin{\lambda} \binom{\cosh{\theta}}{\sinh{\theta}}\cdot \binom{p_k^0}{p_k^1} \geq 0 \ , \hspace{1em} \text{for } 0 \leq \lambda \leq \pi .
\eeq
Therefore the functions $z \mapsto R(Q p(z)\cdot p_k))$ are analytic on the strip $S(0,\pi)$ and bounded on its closure. This allows us to shift the $\theta$ integration in the second line of \eqref{commrelproof} from $\mathds{R}$ to $\mathds{R} + i\pi$, which shows that the whole expression vanishes (See also \cite{Alazzawi12} for a more detailed treatment of these concepts). \\
We have thus shown that the deformed field satisfies the anyonic commutation relations
\beq
\begin{split}
\Phi_{\tW}(f) \Phi_{\tW '}(g) &= e^{-2\pi i\lambda k(\tW,\tW ')}\Phi_{\tW '}(g) \Phi_{\tW}(f), \\
\Phi_{\tW}(f) \Phi^*_{\tW '}(g) &= e^{2\pi i\lambda k(\tW,\tW ')}\Phi^*_{\tW '}(g) \Phi_{\tW}(f),
\end{split}
\eeq
if the localization regions $(\operatorname{supp}(f) + W)''$ and $(\operatorname{supp}(g) + W')''$ are spacelike separated.
\end{proof}
\vspace{1em}
\sloppy{Summing up our results we have constructed field operators $\Phi_{\tW}(f)$ on the Hilbert space $\mathcal{F}_s(L^2(\mathds{R}^3,d\mu))\otimes\mathcal{F}_s(L^2(\mathds{R}^3,d\mu))$} for every path of wedge $\tW$ and test function $f\in\mathscr{S}(\mathds{R}^3)$. This family satisfies the following properties:
\begin{enumerate}[i)]
\item The fields are \emph{polarization-free generators}, i.e. 
\beq
\Phi_{\tW}(f)\Omega \in \mathcal{H}_1 .
\eeq
\item \emph{Covariance} under the representation \eqref{representation} of $\tilde{\mathcal{P}}_+^\uparrow$ holds, i.e.
\beq
U(a,\tL)\Phi_{\tW}(f)U(a,\tL)^{-1} = \Phi_{\tL\tW}(\alpha_{(a,\Lambda)}(f)) .
\eeq
\item Under the representation $J$ of the reflection at the $x_2$-axis the field transforms according to
\beq
J \Phi_{\tW_0}(f) J = \Phi_{\tilde{j}\tW_0}(\alpha_j(f)).
\eeq
\item The fields are \emph{localized in wedge regions} and satisfy \emph{anyonic commutation relations}, depending on the relative winding number of $\tW$ and $\tW '$, i.e.
\beq
\Phi_{\tW}(f)\Phi^\sharp_{\tW '}(g) = e^{\mp 2\pi i\lambda k(\tW,\tW ')} \Phi^\sharp_{\tW '}(g) \Phi_{\tW}(f),
\eeq
if $\operatorname{supp}(f)+W \subset (\operatorname{supp}(g)+W')'$.
\end{enumerate}
Furthermore the \emph{Reeh-Schlieder} property holds for wedges, but what's more important is that it does \emph{not} hold for regions smaller than a wedge, e.g. for spacelike cones. This follows from the recent work \cite{BrosMund} by Bros and Mund, where they show (using results from \cite{BBS}) that there can be no polarization-free generators for Anyons if the Reeh-Schlieder property holds for spacelike cones. \\ \vspace{1em}

\subsection{Scattering States} \label{sec:scattering}
We can now also define two-particle scattering states and their S-matrix following the approach for wedge-localized operators in \cite{BBS}. For this purpose we choose $f$ and $g$ in such a way that $\tilde{f}, \tilde{g}$ have compact support and $\operatorname{supp}\tilde{f}, \operatorname{supp}\tilde{g}$ intersect the upper but not the lower mass shell. Furthermore, we use the notation $p = (p^0,\mathbf{p})$, $\omega_p = \sqrt{\mathbf{p}^2+m^2}$ and define the velocity support $\Gamma(f) := \{(1,\mathbf{p}/\omega_p) : p \in \operatorname{supp}\tilde{f}\}$ of $f$ and its time evolution $f_t(x) = \int dp \tilde{f}(p) e^{i(p^0-\omega_p)t} e^{-ip\cdot x}$. It is well known in scattering theory that for large times $t$ the fields $\Phi_{\tW}(f_t)$ and $\Phi_{\tW '}(g_t)$ are essentially localized in $W + t\Gamma(f)$ and $W'+t\Gamma(g)$ respectively, because asymptotically the support of $f_t$ is contained in $t\Gamma(f)$. Now if the velocity supports of $f, g$ are such that $\Gamma(f) - \Gamma(g) \subset W$ these localization regions are spacelike separated for positive times $t$. Therefore we can define outgoing two-particle scattering states as the limit
\bneq
\lim_{t\to\infty} \Phi_{\tW}(f_t)\Phi_{\tW '}(g_t) =: (f^+,\tW)_+ \times^{out} (g^+,\tW ')_+
\eneq
and by using also $\Phi^*_{\tW}(f)$ we could similarly construct scattering states containing anti-particles. \\
For the incoming scattering states we have to exchange $\tW$ and $\tW '$ because for $t<0$ the localization regions $W+t\Gamma(g)$ and $W'+t\Gamma(f)$ are spacelike separated. This leads to the definition
\bneq
\lim_{t\to-\infty} \Phi_{\tW '}(f_t)\Phi_{\tW}(g_t) =: (f^+,\tW ')_+ \times^{in} (g^+,\tW)_+ .
\eneq
To compute these limits we use that the supports of $f$, $g$ do not intersect the lower mass shell and that the time dependence of $f_t$ is trivial on the upper mass shell. Thus the scattering states simplify to
\bneq \label{scatstates}
\begin{split}
(f^+,\tW)_+ \times^{out} (g^+,\tW ')_+ &= a^*_{\tW}(f^+) a^*_{\tW '}(g^+)\Omega, \\
 (f^+,\tW ')_+ \times^{in} (g^+,\tW)_+ &= a^*_{\tW '}(f^+) a^*_{\tW}(g^+)\Omega.
\end{split}
\eneq
Of course these vectors inherit the nontrivial commutation relations from the fields which create them, leading e.g. to
\bneq
(f^+,\tW)_+ \times^{out} (g^+,\tW ')_+ = e^{-2\pi i\lambda k(\tW,\tW ')}\  (g^+,\tW ')_+ \times^{out} (f^+,\tW )_+.
\eneq
From \eqref{scatstates} the explicit form of the scattering states can be computed, namely
\bneq
\begin{split}
\left((f^+,\tW)_+ \times^{out} (g^+,\tW ')_+\right)(p_1,p_2) &= \frac{1}{\sqrt{2}} \Big( \mathfrak{R}(p_1,p_2) f^+(p_1) g^+(p_2)\ +\ (p_1 \leftrightarrow p_2) \Big), \\
\left((f^+,\tW ')_+ \times^{in} (g^+,\tW)_+\right)(p_1,p_2) &= \frac{1}{\sqrt{2}} \Big( \mathfrak{R}'(p_1,p_2) f^+(p_1) g^+(p_2)\ +\ (p_1 \leftrightarrow p_2) \Big),
\end{split}
\eneq
where $\mathfrak{R}$ and $\mathfrak{R}'$ are defined according to
\beq
\begin{split}
\mathfrak{R}(p_1,p_2) &:= \overline{u_{\tW}^\lambda(p_1)}^2 \overline{u_{\tW}^\lambda(p_2)}\, \overline{u_{\tW '}^\lambda(p_2)}\, R(Qp_1\cdot p_2), \\
\mathfrak{R}'(p_1,p_2) &:= \overline{u_{\tW '}^\lambda(p_1)}^2 \overline{u_{\tW '}^\lambda(p_2)}\, \overline{u_{\tW}^\lambda(p_2)}\, \overline{R(Qp_1\cdot p_2)}.
\end{split}
\eeq
Now taking testfunctions $f, g, h, k \in \mathscr{S}(\mathds{R}^3)$ such that $\Gamma(f) - \Gamma(g) \subset W$ and $\Gamma(h) - \Gamma(k) \subset W$ we can calculate the two-particle S-matrix through
\bneq
\begin{split}
\langle (f^+,\tW)_+ &\times^{out} (g^+,\tW ')_+, (h^+,\tW ')_+ \times^{in} (k^+,\tW)_+\rangle \\
&= \int d\mu(p_1)d\mu(p_2)\left( e^{2\pi i\lambda k(\tW,\tW ')} R(p_1Qp_2)^2\right) \overline{f^+(p_1)}\overline{g^+(p_2)}h^+(p_1) k^+(p_2).
\end{split}
\eneq
So we can see that the momentum dependence of the S-matrix is again determined by $R^2$ which in addition gets multiplied by a phase factor depending only on the relative winding number of the localization regions. Note that we would get the same result for anti-particle scattering because we used the same deformation function on the particle- and the anti-particle space. Of course we could also have written $R^+$ and $R^-$ as in the two-dimensional case, but it is presently unclear to what extend these two functions could differ in $d>2$.
\vspace{1em}

\section{Conclusion and Open Questions}
The method of deforming a free hermitian scalar field to obtain new wedge-localized models with non-trivial S-matrix, described in detail in \cite{Lechner12}, has been generalized to a \emph{charged} field on two- and three-dimensional Minkowski space. We have seen that working on a charged Hilbert space enables us to change the statistics of the deformed fields and, at least in two-dimensions, one can use a larger class of deformation functions in the definition of the deformation. The fields are localized in so-called paths of wedges, have non-trivial commutation relations depending on the winding number of their localization regions and they are covariant with respect to a representation of $\tilde{\mathcal{L}}_+^\uparrow$ with spin $\lambda\in\mathds{R}$. \\
The weakened localization in wedges instead of double cones or spacelike cones still allows to define two-particle scattering states and the wedge-algebras generate dense sets in $\mathcal{H}$ from the vacuum. However, nothing is known about algebras for smaller spacetime regions, except that they cannot satisfy the Reeh-Schlieder property due to the recent work by Bros and Mund \cite{BrosMund}. \\
Another open question is if the deformation functions defined in \eqref{generaldeformation} are actually the most general which are possible in this case. We currently also do not know if these additional admissible functions can be generalized to the higher dimensional case. \\
In \cite{Lechner12} the deformation was originally defined on the testfunction algebra (the so-called Borchers-Uhlmann algebra) and a representation on Fock space was then introduced via the GNS-construction. In the present work we defined the deformation on Fock space from the outset and it would be interesting to know if one can understand it as the GNS representation of a deformation of the underlying Borchers-Uhlmann algebra. \\
In addition one would like to generalize the deformation of a charged scalar field to a situation with a multi-component field where different particle species and charges are present. In this case the commutation relations might be governed not only by phase factors, but by more general matrices yielding a non-Abelian representation of the braid group.

\vspace{1em}
\subsection*{Acknowledgments}
I would like to thank the Vienna deformation group - S. Alazzawi, C. Köhler, M. Könenberg and J. Schlemmer - for helpful discussions. I am furthermore indebted to Gandalf Lechner and my advisor Jakob Yngvason for providing constant support. I also want to thank Jens Mund and Bert Schroer for comments on a first draft of this paper. \\
This work was supported by the FWF-project P22929-N16 ``Deformations of Quantum
Field Theories''. 
\vspace{2em}


\begin{thebibliography}{10}
\bibitem{Alazzawi12}
S. Alazzawi, \emph{Deformations of Fermionic Quantum Field Theories and Integrable Models}, Lett. Math. Phys. (2012), 10.1007/s11005-012-0576-3

\bibitem{Bargmann48}
V. Bargmann, \emph{Irreducible Unitary Representations of the Lorentz Group},
Ann. Math. \textbf{48} (1947), 568–640

\bibitem{BBS}
H.-J. Borchers, D. Buchholz and B. Schroer, \emph{Polarization-free generators and the S-matrix}, Commun. Math. Phys. \textbf{219} (2001), 125-140

\bibitem{BrosMund}
J. Bros and J. Mund, \emph{Braid Group Statistics Implies Scattering in Three-Dimensional Local Quantum Physics},  Commun. Math. Phys. \textbf{315} (2012), 465-488


\bibitem{FroeMar}
J. Fröhlich and P.A. Marchetti, \emph{Spin-Statistics Theorem and Scattering in Planar Quantum Field Theories with Braid Statistics}, Nucl. Phys. B \textbf{356} (1991), 533-573

\bibitem{GrossLech07}
H. Grosse and G. Lechner, \emph{Wedge-Local Quantum Fields and Noncommutative
Minkowski Space}, JHEP 11:012 (2007)

\bibitem{GrossLech}
H. Grosse and G. Lechner, \emph{Noncommutative Deformations of Wightman Quantum Field Theories}, JHEP 09:131 (2008)

\bibitem{Lechner03}
G. Lechner, \emph{Polarization-Free Quantum Fields and Interaction} Lett. Math. Phys. \textbf{64} (2003), 137–154

\bibitem{Lechner06}
G. Lechner, \emph{On the Construction of Quantum Field Theories with Factorizing S-Matrices}, PhD thesis, University of Göttingen (2006)

\bibitem{Lechner08}
G. Lechner,  \emph{Construction of Quantum Field Theories with Factorizing S-Matrices}, Commun. Math. Phys. \textbf{277} (2008), 821-860

\bibitem{Lechner12}
G. Lechner, \emph{Deformations of Quantum Field Theories and Integrable Models}, Commun. Math. Phys. \textbf{212} (2012), 265-302 

\bibitem{Mund98}
J. Mund, \emph{No-Go Theorem for “Free” Relativistic Anyons in $d=2+1$}, Lett. Math. Phys. \textbf{43} (1998),  319-328

\bibitem{Mund03}
J. Mund, \emph{Modular Localization of Massive Particles with Any Spin in d=2+1}, J. Math. Phys. \textbf{44} (2003), 2037-2057

\bibitem{Mund09}
J. Mund, \emph{The Spin-Statistics Theorem for Anyons and Plektons in d = 2+1}, Commun. Math. Phys. \textbf{286} (2009), 1159-1180

\end{thebibliography}

\end{document}